\def\?[#1]{\textbf{[#1]}\marginpar{\Large{\textbf{??}}}}
\def\smallsection#1{\smallskip\noindent\textbf{#1}.}
\let\epsilon=\varepsilon 
\newcommand{\RR}{{\mathbb R}}
\newcommand{\CC}{{\mathbb C}}
\newcommand{\ZZ}{{\mathbb Z}}
\newtheorem{theo}{Theorem}
\newtheorem{prop}{Proposition}[section]
\newtheorem{lemm}[prop]{Lemma}
\numberwithin{equation}{section}
\DeclareMathOperator{\Res}{Res}
\DeclareMathOperator{\Spec}{Spec}
\DeclareMathOperator{\comp}{comp}
\let\Im=\Imag
\DeclareMathOperator{\loc}{loc}
\let\Re=\Real
\DeclareMathOperator{\supp}{supp}
\DeclareMathOperator{\vol}{vol}
\DeclareMathOperator{\tr}{tr}
\newcommand{\CI}{C^\infty}
\newcommand{\CIc}{C^\infty_{\rm{c}}}
\def\indic{\operatorname{1\hskip-2.75pt\relax l}}
\title{A Fermi golden rule for quantum graphs}
\author{Minjae Lee}
\email{lee.minjae@math.berkeley.edu}
\author{Maciej Zworski}
\email{zworski@math.berkeley.edu}
\address{Department of Mathematics, University of California,
Berkeley, CA 94720, USA}
\begin{document}

\begin{abstract}
We present a Fermi golden rule giving rates of decay
of states obtained by perturbing  embedded 
eigenvalues of a quantum graph. To illustrate the procedure
in a notationally simpler setting we also present a Fermi Golden
Rule for boundary value problems on surfaces with constant curvature
cusps. We also provide a resonance existence
result which is 
uniform on compact sets of energies and metric graphs. The results are illustrated by 
numerical experiments.

\end{abstract}

\maketitle

\section{Introduction and statement of results}
\label{intr}

Quantum graphs are a useful model for spectral properties 
of complex systems.
The complexity is captured by the graph but analytic aspects remain
one dimensional and hence relatively simple. We refer to the
monograph by Berkolaiko--Kuchment \cite{BeKu}
for references to the rich literature on the subject. 

In this note we are interested in graphs with infinite leads
and consequently with continuous spectra. We study
dissolution of embedded eigenvalues into the continuum and 
existence of resonances close to the continuum. Our motivation 
comes from a recent Physical Review Letter \cite{gnu} by 
Gnutzmann--Schanz--Smilansky and from a mathematical study by 
Exner--Lipovsk\'y \cite{exlip}. 

We consider an oriented graph with vertices
$ \{v_j\}_{j=1}^J $, 
infinite leads $ \{ e_k\}_{ k=1}^{K} $, $ K > 0 $, and $ M$ 
finite edges $ \{e_m\}_{ m=K+1}^{M+K} $.
 We assume that each finite edge, $ e_m $, has two distinct vertices
as its boundary (a non-restrictive no-loop condition) and we 
write $ v \in e_m $ for these two vertices $ v $. An infinite lead
has one vertex. 
The set of (at most two)
common vertices of $ e_m $ and $ e_\ell $ is denoted by $ e_m \cap e_\ell $
and we we denote by $ e_m \ni v $ the set of all edges having $ v $ 
as a vertex.

The finite edges are assigned
length $ \ell_m $, $ K+1 \leq m \leq M +K $ 
and we put $ \ell_k = \infty $, $ 1 \leq k \leq K $, for the infinite
edges.  To obtain a {\em quantum graph} we define
a Hilbert space,
is given by 
\begin{gather*} L^2 := \bigoplus_{m=1}^{K+M} L^2 ( [0, \ell_m] )  ,  
\ \  L^2 \ni u = ( u_1,  \cdots 
u_{M+K} ) , \ \ u_m \in  L^2 ( [0, \ell_m] ) . \end{gather*}

We then consider the simplest quantum graph Hamiltonian which 
is unbounded operator $ P $ on $ L^2$ defined by
$ (P u)_m  = - \partial_x^2 u_m $ with 
\[  
\mathcal D ( P ) = \{ u : u_m \in H^2 ( [ 0 , \ell_m ] ) , \
u_m ( v ) = u_\ell ( v ) , \  v \in e_m \cap e_\ell , \ 
\sum_{ e_m \ni v } \partial_{\nu} u_m ( v ) = 0 \} . \]
Here $ \partial_\nu $ denotes the outward pointing normal at boundary
of $ e_v $: 
\[ u_m \in H^2 ( [ 0 , \ell_m ] ) , \ \ \partial_\nu u_m ( 0 ) = - u_m'(0), \ \
\partial_\nu u_m ( \ell_m ) = u_m'(\ell_m) .\]

Quantum graphs with infinite leads fit neatly into the general 
abstract framework of {\em black box} scattering \cite{SZ1} and hence we can 
quote general results \cite[Chapter 4]{res} in spectral and scattering theory.

\begin{SCfigure}
\includegraphics[width=7cm]{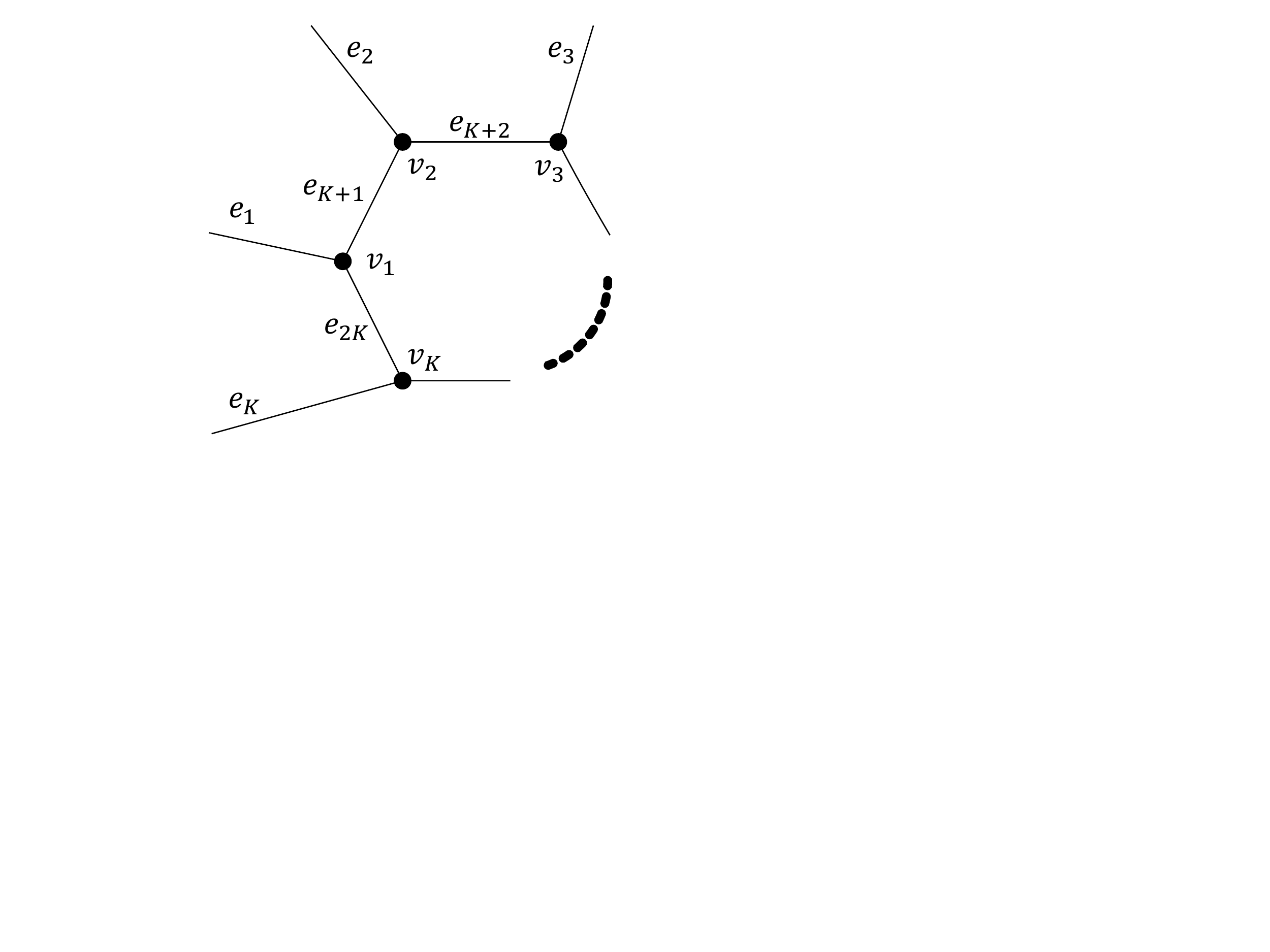}
\label{f:cycle}
\caption{A graph  given by a cycle
$ \{ e_{ k} \}_{k= K+1}^{2K} $ connected to $ K $ infinite leads $ \{ e_k \}_{k=1}^K $
at $ K $ vertices: $ v_k $, $ e_{K+k} \cap e_{ K+k-1} = v_k $, 
$ e_{ 2K } \cap e_{ K + 1} = v_1 $, $ e_{ k } \cap e_{ K+k} = v_k $.
The lengths of finite edges are given by $ \ell_k ( t ) = e^{-2 a_k ( t ) } \ell_k $, $ K + 1 \leq k \leq 2K $. If $ \ell_k ( 0 ) $'s are rationally
related then $ P ( 0 ) $ has eigenvalues, $ \lambda ( 0 ) $, embedded in the continuous spectrum. If $ \lambda ( 0 ) $ is simple then $ \lambda ( 0 ) $ 
belongs to a smooth family of resonances, $ \lambda ( t )$, 
$ \Im \lambda ( t) \leq 0 $.
Theorem \ref{t:0} and Example 1 in 
\S \ref{th0} show that in this case 
$ \Im \ddot \lambda = \lambda^2 \sum_{ k=1}^K | \langle \dot a u , 
e^k ( \lambda ) \rangle |^2 
$, where $ u $ is the normalized eigenfuction corresponding to $ u $ and
$ e^k ( \lambda ) $ is the generalized eigenfuction normalized in the $kth$ lead -- see \eqref{eq:ekla}.}
\end{SCfigure}

When $ K > 0 $ then the projection on the continuous spectrum of $ P $ is 
given in terms of generalized eigenfunctions $ e^k ( \lambda ) $, $ 1 \leq k \leq K $, which for $ \lambda \notin \Spec_{\rm{pp}} ( P ) $ are 
characterized as follows:
\begin{gather}
\label{eq:ekla} 
\begin{gathered}  e^k ( \lambda ) \in \mathcal D_{\rm{loc}} ( P ) ,  \ \ \
( P - \lambda^2 ) e^k (\lambda)  = 0 , \\  e^k_m ( \lambda , x ) = \delta_{mk} e^{ - i \lambda x } 
+ s_{ mk} ( \lambda ) e^{ i \lambda x } , \ \ \ 1 \leq m   \leq K. \end{gathered}
\end{gather}
The family $ \lambda \mapsto e^k ( \lambda ) \in \mathcal D_{\rm{loc} } ( P ) $
extends holomorphically to a neighbourhood of $ \RR $ and that defines
$ e^k ( \lambda ) $ for all $ \lambda $. We will in fact be interested in 
$ \lambda \in \Spec_{\rm{pp}} ( P ) $. The functions $ e^k $ parametrize
the continuous spectrum of $ P $ -- see \cite[\S 4.4]{res} and \eqref{eq:Rlaek} 
below. 

We now consider a family of quantum graphs obtained by varying the
lengths $ \ell_m $, $ K+1 \leq m \leq M+K  $:
\begin{equation}
\label{eq:amt}    \ell_m ( t ) = e^{ - a_m ( t )  } \ell_m ,  \ \ a_m ( 0 ) = 0 . 
\end{equation}
and the corresponding family of operators, $ P ( t ) $. 
The works \cite{exlip} and \cite{gnu} 
considered the case in which $ P ( 0 ) $ has embedded eigenvalues and
investigated
the resonances of the deformed family $ P ( t ) $ converging to these
eigenvalues as $ t \to 0 $. Here we present a Fermi golden rule type formula
(see \S \ref{cusps} for references to related mathematical work) 
which gives an infinitesimal condition for the disappearance of an 
embedded eigenvalue. It becomes a resonance of $ P $ and one can 
calculate the infinitesimal rate of decay. Resonances are defined
as poles of the meromorphic continuation of $ \lambda \mapsto ( P - \lambda^2)^{-1} $ to $ \CC $ as an operator $ L^2_{\comp} \to 
L^2_{ \loc } $ (see \cite[\S 4.2]{res} and for a self-contained
general argument Proposition \ref{p:reso}). We denote the set of 
resonances of $ P $ by $ \Res ( P ) $.

\begin{theo}
\label{t:0}
Suppose that $ \lambda^2 > 0 $ is a simple eigenvalue of $ P = P ( 0 ) $ and $ u $ is 
the corresponding normalized eigenfunction. Then for $ |t| \leq t_0 $ 
there exists a smooth function $ t \mapsto \lambda (t) $ such that 
$ \lambda ( t ) \in \Res ( P ) $ and
\begin{gather}
\label{eq:FGR0}
\begin{gathered} 
 \Im \ddot \lambda = - 
 \sum_{k=1}^K | F_k|^2, 
 \\ 
 F_k:= \lambda \langle \dot a u , e^k ( \lambda ) \rangle 
+ \lambda^{-1} \sum_{ v } \sum_{ e_m \ni v } 
{\textstyle \frac14} \dot a_m 
( 3 \partial_\nu u_m ( v)  \overline{ e^k ( \lambda , v )} -
u ( v ) \partial_\nu \overline { e^k_m ( \lambda, v ) } )
\end{gathered} \end{gather}
\end{theo}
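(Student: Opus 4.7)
My plan follows the two-step route standard for Fermi golden rule arguments. First, I would move the $t$-dependence off the Hilbert space by conjugating $P(t)$ with the unitary induced by the edgewise rescaling $x \mapsto e^{-a_m(t)}x$, yielding a smooth family $\widetilde P(t)$ on the fixed space $L^2 = \bigoplus_m L^2([0,\ell_m])$ with $\widetilde P(0) = P$ and Taylor expansion $\widetilde P(t) = P + tP_1 + t^2 P_2 + O(t^3)$: the interior part of $P_1$ is $-2\dot a_m \partial_x^2$ on each edge (up to lower order), while $t$-differentiating the Kirchhoff conditions under the rescaling feeds contributions supported at the vertices. Since $\lambda^2$ is a simple eigenvalue with normalized eigenfunction $u$, I next set up a Grushin problem
\[ \mathcal P(t,\zeta) := \begin{pmatrix} \widetilde P(t) - \zeta & \,\cdot\,u\\ \langle\,\cdot\,,u\rangle & 0 \end{pmatrix}, \]
invertible in a small complex neighborhood of $(0,\lambda^2)$ in the black-box sense (see \cite{SZ1} and \cite[Chapter 4]{res}), with scalar effective Hamiltonian $E_{-+}(t,\zeta)$ holomorphic there. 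The resonances of $\widetilde P(t)$ close to $\lambda^2$ coincide with the zeros of $E_{-+}(t,\cdot)$, and the implicit function theorem then produces the smooth family $\zeta(t) = \lambda(t)^2$.

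Next, I expand $E_{-+}$ to second order via the Schur complement:
\[ E_{-+}(t,\zeta) = (\zeta - \lambda^2) - t\langle P_1 u, u\rangle - t^2\bigl(\langle R(\zeta)_\perp P_1 u, P_1 u\rangle - \langle P_2 u, u\rangle\bigr) + O(t^3), \]
where $R(\zeta) = (P-\zeta)^{-1}$, $R(\zeta)_\perp$ is its restriction to the orthogonal complement of $u$, and the meromorphic continuation through $\RR$ is taken from the upper half-plane. Setting this to zero, using that $\langle P_1 u, u\rangle$ is real, and taking the imaginary part of the $t^2$ coefficient gives
\[ \Im \ddot\zeta(0) = -2\, \Im \langle R(\lambda^2+i0)_\perp P_1 u, P_1 u\rangle . \]
The black-box Stone-type formula \cite[\S 4.4]{res} represents this as
\[ \Im \langle R(\lambda^2+i0)_\perp P_1 u, P_1 u\rangle = \frac{1}{4\lambda}\sum_{k=1}^K|\langle P_1 u, e^k(\lambda)\rangle|^2 \]
(the discrete-spectrum contributions vanish against the orthogonal projection $I - u\otimes\bar u$). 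Combining with the elementary identity $\Im\ddot\lambda = \Im\ddot\zeta(0)/(2\lambda)$ yields $\Im\ddot\lambda = -(4\lambda^2)^{-1}\sum_k|\langle P_1 u, e^k(\lambda)\rangle|^2$.

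Finally, I identify $\langle P_1 u, e^k(\lambda)\rangle = 2\lambda F_k$ to arrive at the stated formula. The interior contribution, using $-\partial_x^2 u = \lambda^2 u$, simplifies to $2\lambda^2\langle \dot a\, u, e^k(\lambda)\rangle$ and matches the $\lambda\langle \dot a u, e^k(\lambda)\rangle$ term in $F_k$. The vertex contribution, coming from $t$-differentiating the rescaled Kirchhoff conditions, is evaluated by edge-by-edge integration by parts and collapsed with the Kirchhoff conditions $\sum_{e_m\ni v}\partial_\nu u_m(v) = 0$ and the analogous identity for $e^k(\lambda)$; this should produce precisely $\tfrac12 \sum_v\sum_{e_m\ni v}\dot a_m\bigl(3\partial_\nu u_m(v)\overline{e^k(\lambda,v)} - u(v)\partial_\nu\overline{e^k_m(\lambda,v)}\bigr)$, matching the $\lambda^{-1}\cdot\tfrac14$ boundary term in $F_k$. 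The main obstacle lies in this last computation: the Grushin reduction, the implicit function theorem for $\lambda(t)$, and the Stone-type identity are all routine in the black-box framework, but isolating the asymmetric combination $3\partial_\nu u_m \overline{e^k} - u\partial_\nu\overline{e^k_m}$ with the precise coefficient $\tfrac14$ requires careful bookkeeping of the boundary terms produced on each edge and cancellation of intermediate terms via the eigenvalue equation $(-\partial_x^2 - \lambda^2)e^k(\lambda) = 0$ together with the vertex conditions for both $u$ and $e^k(\lambda)$.
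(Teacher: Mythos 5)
Your route --- unitary rescaling onto a fixed Hilbert space, then a Grushin reduction with effective Hamiltonian $E_{-+}$, a second order expansion of its zero, and a Stone-type formula --- is genuinely different from the paper's proof, which never forms an effective Hamiltonian: it derives $\Im\ddot z$ from the outgoing flux identity at $x=R$, reverses the integration by parts to expose vertex boundary terms (because $\dot u\notin\mathcal D(P)$), and computes the derivative of the resonant state explicitly as $\dot u=\alpha u+g+R(\lambda)(\dot z u-2z\dot a u-(P-z)g)$ with a corrector $g$ chosen to absorb the inhomogeneous vertex conditions \eqref{eq:gioconda}. Your normalizations are at least consistent with the paper's answer: granting $\langle P_1u,e^k(\lambda)\rangle=2\lambda F_k$, the factor $\tfrac1{4\lambda}$ in your Stone formula agrees with \eqref{eq:Rlaek} and the arithmetic does give $\Im\ddot\lambda=-\sum_k|F_k|^2$.

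There is, however, a genuine gap at the heart of the argument. After the unitary conjugation the \emph{domain} of the conjugated operator still depends on $t$: continuity becomes $e^{a_m(t)/2}u_m(v)=e^{a_\ell(t)/2}u_\ell(v)$ and the Kirchhoff condition becomes $\sum_{e_m\ni v}e^{3a_m(t)/2}\partial_\nu u_m(v)=0$. Hence there is no operator expansion $\widetilde P(t)=P+tP_1+t^2P_2+O(t^3)$ on a common domain; the ``contributions supported at the vertices'' entering your $P_1$ are not $L^2$ functions but distributional terms at the vertices, so $\langle P_1u,u\rangle$, $\langle R(\zeta)_\perp P_1u,P_1u\rangle$ and $\langle P_1u,e^k(\lambda)\rangle$ are not defined as written. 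Making them meaningful requires exactly the device the paper introduces: a smooth corrector $g$ satisfying the differentiated vertex conditions \eqref{eq:gioconda}, after which the resolvent is only ever applied to an honest $L^2$ function and the vertex terms in $F_k$ emerge from pairing $(P-z)g$ with $e^k(\lambda)$ via the integration-by-parts identity \eqref{eq:parts} together with the Kirchhoff conditions for $u$ and $e^k$. You defer precisely this computation (``this should produce precisely\dots''), but the asymmetric combination $3\partial_\nu u_m(v)\overline{e^k(\lambda,v)}-u(v)\partial_\nu\overline{e^k_m(\lambda,v)}$ with coefficient $\tfrac14$ \emph{is} the content of the theorem, so the proposal establishes the shape of the answer rather than the formula itself. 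A secondary point: the Grushin problem must be posed for the meromorphically continued (outgoing) resolvent rather than the $L^2$ resolvent for its zeros to be resonances and for the smooth family $\lambda(t)$ to exist; the paper handles this by proving smoothness of the continued resolvent directly in Proposition \ref{p:smooth}.
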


\begin{figure}
\includegraphics[width=\textwidth]{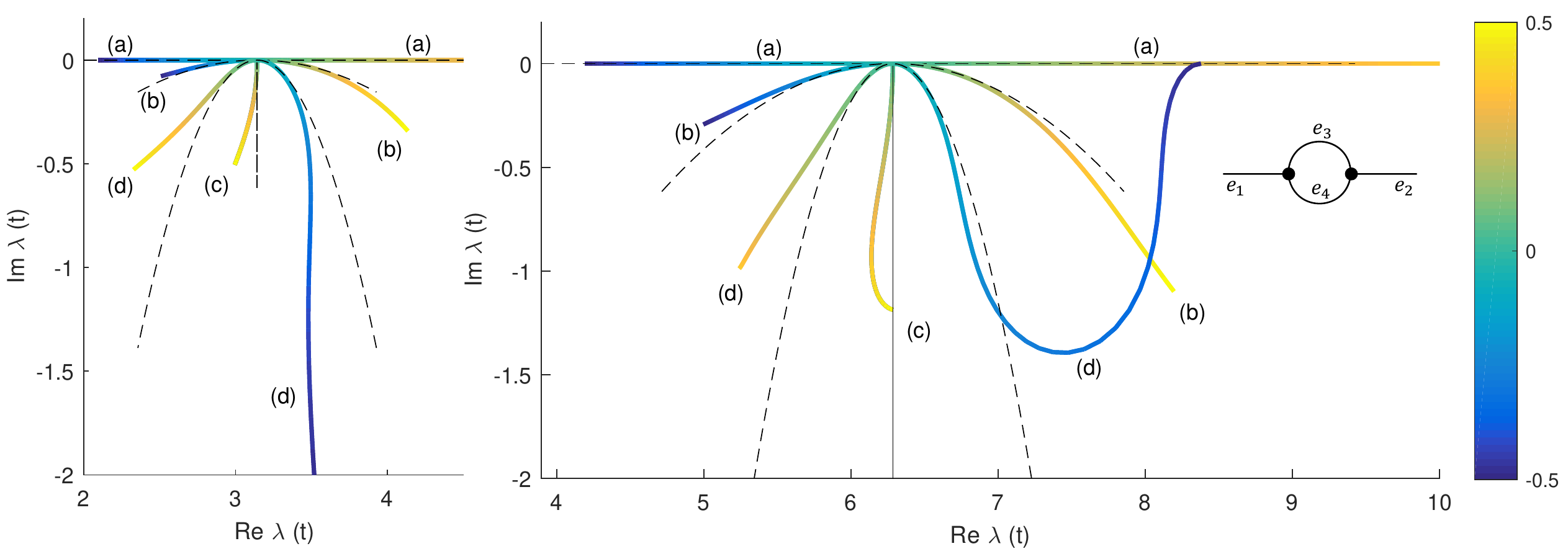}
\centering
\caption{\label{f:2}
A simple graph with embedded eigenvalues, $M=K=2$. Solid lines and dashed lines 
indicate the trajectory of $\lambda(t)$ and of the second order approximation $ 
\tilde \lambda ( t ) = \lambda + t \dot \lambda  + 
\frac i2 t^2 \Im \ddot \lambda $, respectively. (The colour coding indicates
the parameter $ t $ shown in the colour bar.) We approximate the real 
part linearly using \eqref{eq:zdot1} and the imaginary quadratically 
using \eqref{eq:FGR0}. The four cases are
(a): \(\ell_3 (t)=1-t,~\ell_4(t)=1-t\), 
(b): \(\ell_3 (t)=1-t,~\ell_4(t)=1\),
(c): \(\ell_3 (t)=1-t,~\ell_4(t)=1+t\), 
(d): \(\ell_3 (t)=1-t,~\ell_4(t)=1+2t\).
 }
\end{figure}

The proof is given in \S \ref{th0} and that section is concluded
with two examples: the first gives graphs and eigenvalues for 
which $ F_k = \lambda \langle \dot a u , e^k ( \lambda ) \rangle $
-- see Figures \ref{f:cycle} and \ref{f:2}. The second example
gives a graph and an eigenvalue for which the boundary terms in 
the formula for $ F_k $ are needed -- see Fig.~\ref{f:ex2}.

The formula \eqref{eq:FGR0} gives a condition for the existence 
a resonance with a nontrivial imaginary part (decay rate) near
an embedded eigenvalue of the unperturbed operator: 
$ D ( \lambda_0, c t ) \cap \Res ( P (t ) ) \neq \emptyset$ for some 
$ c $ and for $ |t| \leq t_0 $, where the constants $ c $ and $ t_0 $
depend on $ \lambda_0 $ and $ P ( t ) $. However,
it is difficult to estimate the speed with which 
the resonance $ \lambda ( t ) $ moves -- that is already visible 
in comparing Fig.~\ref{f:2} with Fig.~\ref{f:ex2}. 
(A striking example is given by $ P ( t ) = - \partial_x^2 + t V ( x ) $
where $ V \in \CIc ( \RR ) $ and $ t \to 0 $; infinitely many resonances
for $ t \neq 0 $ \cite{Zw} disappear and $ P ( 0 ) $ has only one resonance at $ 0 $.) Also, the result is not uniform if we vary $ \lambda_0 $ or the
lengths of the edges. 

The next theorem adapts the method of Tang--Zworski \cite{tz} and Stefanov 
\cite{St} (see
also \cite[\S 7.3]{res}) to obtain existence of resonances 
near any approximate eigenvalue and in 
particular near an embedded eigenvalue -- see the example following the statement.
In particular this applies to the resonances studied
in \cite{exlip} and \cite{gnu}. The method applies however to very 
general Hamiltonians -- for semiclassical operators on graphs the 
general black box resuls of \cite{tz} and \cite{St} apply verbatim.
The point here is that the constants are uniform even though the 
dependence on $ t $ is slightly weaker. 

To formulate the result we define $ D ( \lambda_0, r ) = \{ \lambda \in \CC : 
| \lambda - \lambda_0 | < r \} $ and 
\begin{equation}
\label{eq:HR} 
  \mathcal H_R  := \bigoplus_{ m = 1}^{ K} L^2 ( [ 0 , R ] ) \oplus
\bigoplus_{ m=K+1}^{K+M}  L^2 ( [ 0 , \ell_m ]) . \end{equation}

\begin{theo}
\label{t:q2r}
Suppose that $ P $ is defined above and the lengths, $ \ell_m $,  
have the property that $ \ell_m \in \mathcal L $, $ K+1 \leq m \leq M + K $
where $ \mathcal L $  is a fixed compact subset of
the the open half-line.

Then for any $ \mathcal L \Subset ( 0 , \infty ) $, $ I \Subset 
( 0 , \infty )  $, $ R > 0 $ and 
$ \gamma < 1 $  there exists $ \epsilon_0 > 0 $ such that 
\begin{equation}
\label{eq:quasim}
\exists \, u \in \mathcal H_R \cap \mathcal D (P) , \ \lambda_0 \in I 
\ \text{ such that } \ 
\|u \|_{L^2} = 1, \ \ 
\| ( P - \lambda_0^2) u \| = \epsilon < \epsilon_0 
\end{equation} 
implies
\begin{equation}
\label{eq:q2r}
\Res ( P ) \cap D ( \lambda_0 , \epsilon^\gamma ) \neq \emptyset .
\end{equation}
\end{theo}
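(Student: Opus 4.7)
The approach is the quasimode-to-resonance conversion of Tang--Zworski~\cite{tz} and Stefanov~\cite{St}, in the form recorded in \cite[\S 7.3]{res}. First I fix a cutoff $\chi \in \CIc$ which equals $1$ on $[0, R+1]$ in each of the $K$ leads and on all finite edges, and consider the meromorphic continuation $R_\chi(\lambda) := \chi(P-\lambda^2)^{-1}\chi$ from Proposition~\ref{p:reso}, whose poles are the elements of $\Res(P)$. The first key ingredient is an a priori upper bound of the form
\[ \|R_\chi(\lambda)\|_{L^2\to L^2} \leq \frac{C}{\prod_{\mu \in \Res(P) \cap D(\lambda,1)} |\lambda - \mu|},\qquad \lambda \in \Omega, \]
on a fixed complex neighborhood $\Omega$ of $I$, together with a uniform bound $\#(\Res(P) \cap D(\lambda,1)) \leq N_0$. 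For quantum graphs these are concrete: $R(\lambda)$ reduces to a linear system on the edges whose determinant is a trigonometric polynomial in $\lambda$ of uniformly bounded degree, so Jensen's formula delivers both estimates, and compactness of $\mathcal L$ gives uniformity of $C$ and $N_0$.

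The second ingredient is the quasimode lower bound. Given $u,\lambda_0$ as in \eqref{eq:quasim}, set $f := (P-\lambda_0^2)u$, so $\|f\|=\epsilon$. The support hypothesis $u \in \mathcal H_R$ gives $\chi u = u$ and $\chi f = f$, so for $\Im \lambda > 0$ inverting $(P-\lambda^2)u = f + (\lambda_0^2 - \lambda^2)u$ and multiplying by $\chi$ from the left yields
\[ u = R_\chi(\lambda) f + (\lambda_0^2 - \lambda^2) R_\chi(\lambda) u, \]
whence $1 \leq \|R_\chi(\lambda)\|(\epsilon + |\lambda_0^2 - \lambda^2|)$. Taking $\lambda = \lambda_0 + i\epsilon^\gamma$ and using $\lambda_0 \in I \Subset (0,\infty)$ and $\gamma < 1$ gives $\|R_\chi(\lambda_0+i\epsilon^\gamma)\| \geq c\,\epsilon^{-\gamma}$ once $\epsilon < \epsilon_0$.

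To conclude, I argue by contradiction: if $\Res(P) \cap D(\lambda_0, \epsilon^\gamma) = \emptyset$, then $R_\chi$ is holomorphic on $D(\lambda_0, \epsilon^\gamma)$ and, by the first step (only the $N_0$ away resonances contribute, each at distance $\geq \epsilon^\gamma$), bounded there by $C\,\epsilon^{-\gamma N_0}$. A Cartan--Jensen/three-circles argument, in the sharpened form of \cite{St} that reaches the full range $\gamma < 1$, compares this boundary control to the interior lower bound $c\,\epsilon^{-\gamma}$ at $\lambda_0 + i\epsilon^\gamma$ and forces a contradiction for $\epsilon < \epsilon_0$.

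The main obstacle is the uniformity in the first step: guaranteeing that the constants $C, N_0$ and the disk radius $1$ in the a priori bound do not degenerate as $\ell$ runs through $\mathcal L^M$. Since $\mathcal L \Subset (0,\infty)$, the characteristic trigonometric polynomial depends smoothly on $\ell$ with uniformly bounded degree and a uniform lower bound for its non-resonant values, so standard compactness turns the pointwise Jensen estimates into uniform ones; this bookkeeping, while routine in principle, is the heart of the proof.
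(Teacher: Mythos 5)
Your overall architecture is the right one (argue by contradiction, use the a priori factorized resolvent bound of Proposition \ref{p:reso}, use the quasimode to produce a lower bound, close with complex interpolation), but the concluding step has a genuine gap: the two estimates you propose to ``compare'' are not in contradiction. A function holomorphic on $D(\lambda_0,\epsilon^\gamma)$ and bounded there by $C\epsilon^{-\gamma N_0}$ can perfectly well have modulus $c\,\epsilon^{-\gamma}$ at the interior point $\lambda_0+i\epsilon^\gamma$ (the constant $\epsilon^{-\gamma}$ does both), and no Jensen/three--circles argument extracts a contradiction from an upper bound that dominates the lower bound. The indispensable ingredient you never invoke is the self-adjoint resolvent estimate in the physical half-plane, $\| \indic_{r\leq R} R(\lambda)\indic_{r\leq R}\| \leq C/\Im\lambda$ for $\Im\lambda>0$, $\Re\lambda$ near $I$ (this is \eqref{eq:resoup}). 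The actual mechanism is a two-constants (three-lines) estimate on a thin rectangle $[\lambda_0-\rho,\lambda_0+\rho]+i[-\delta_-,\delta_+]$: on the top line the resolvent is $O(1/\delta_+)$ by self-adjointness, on the bottom line it is $O(\delta^{-C_1})$ by the no-resonance hypothesis together with Proposition \ref{p:reso}, and the harmonic weight of the top line at the real point is $\theta=\delta_-/(\delta_++\delta_-)$, close to $1$ when $\delta_+\ll\delta_-$. Choosing $\rho=\epsilon^{\gamma_1}$, $\delta_-=\epsilon^{\gamma_2}$, $\delta_+=\epsilon^{\gamma_3}$ with $\gamma<\gamma_1<\gamma_2<\gamma_3<1$ yields $\|\indic_{r\leq R}R(\lambda_0)\indic_{r\leq R}\|\ll 1/\epsilon$ (this is Lemma \ref{l:threeline}; the quadratic correction in the subharmonic comparison function is there to control the lateral sides of the rectangle). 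This is what contradicts the quasimode bound, which should be taken \emph{at the real point}: $1=\langle R(\lambda_0)(P-\lambda_0^2)u,u\rangle\leq\|\indic_{r\leq R}R(\lambda_0)\indic_{r\leq R}\|\,\epsilon$, so $\|\indic_{r\leq R}R(\lambda_0)\indic_{r\leq R}\|\geq 1/\epsilon$. Your lower bound $c\,\epsilon^{-\gamma}$ at $\lambda_0+i\epsilon^\gamma$ is both weaker and located at a point where it is already matched by the universal upper bound $C/\Im(\lambda^2)\sim C\epsilon^{-\gamma}$, so it cannot drive any contradiction.

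A secondary comment on your first ingredient: you propose to get the resonance count and the factorized bound from the secular (trigonometric) determinant of the graph. That is viable for these specific Hamiltonians (the paper points to Davies--Pushnitski and Davies--Exner--Lipovsk\'y for sharper graph-specific results), but the paper deliberately runs the black-box Fredholm determinant argument, making the trace-norm bound on $K(\lambda,\lambda_0)$ uniform over $\ell\in\mathcal L^{M}$ by bracketing the reference operator's eigenvalues between those of $\widetilde P_{\min}$ and $\widetilde P_{\max}$. Either route can supply the upper bounds; neither repairs the missing interpolation step above.
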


\noindent
{\bf Example.} Suppose that $ P ( t ) $ is the family of operators
defined by choosing $ \ell_j = \ell_j ( t ) \in C^1 ( \RR ) $,  and that 
$ \lambda_0 > 0 $ is an 
eigenvalue of $ P ( 0 ) $. Then for any $ \gamma < 1 $
there exists $ t_0 $ such that for $ |t | \leq t_0 $
\begin{equation}
\label{eq:Ptq2r}  \Res ( P ( t) ) \cap D ( \lambda_0 , t^\gamma ) \neq \emptyset . \end{equation}

\begin{proof}
Let $ u^0  $ be a normalized eigenfunction of $ P ( 0 ) $ with 
eigenvalue $ \lambda_0 $; in particular $ u^0_k \equiv 0 $, $ 1 \leq k \leq K$. Choose $ \chi_j \in \CI ( \RR ; [ 0 , 1 ] ) $, 
$ j = 1,2 $, such that $ \chi_0 + \chi_1 = 1 $, $ \chi_j (s) = 1 $ near 
$ |j - s | < \frac13 $ and define $ u^-_m ( x) := \chi_0 ( x/\ell_m ) u^0_m ( x ) $ and $ u_m^+ ( x ) := \chi_1 ( x / \ell_m ) u^0_m ( x ) $, 
$ u^0 = u^+ + u^- $. 

We now define a quasimode for $ P ( t ) $, $ u = u (t ) $ needed 
in \eqref{eq:quasim}: 
\[  u_m ( t ) =  u_m^-  ( x )  +  u_m^+ ( x - \delta_m ( t ) ) , \ \ 
\delta_m ( t ) := \ell_m ( t ) - \ell_m ( 0 ) . 
\] 
For $ t $ small enough $ \supp u_m^- \subset [0 , \frac23) \subset
[ 0 , \ell_m ( t ) ) $
and $ \supp u_m^+ \subset ( \frac23 , \ell_m (0 ) ] 
\subset ( |\delta_m ( t )| , \ell_m ( 0 ) ] $. Hence
the values of $ u_m ( t ) $ and $ \partial_\nu u_m ( t) $ 
at the vertices are the same as those of $ u^0_m $ and 
 $ u_m ( t) \in \mathcal D ( P ( t ) )$. 
Also, since $ ( - \partial_x^2 - \lambda_0^2 ) u_m^0 = 0 $ and $ 
\chi_0^{(k)} = - \chi_1^{(k)} $, (and putting $ \ell_m = \ell_m ( 0 ) $)
\[ \begin{split} 
 [ ( P ( t ) - \lambda_0^2 ) u ( t ) ]_m & = 
\ell_m^{-2} ( \chi_0'' ( (x - \delta_m ( t )) /\ell_m  ) 
 u_m^0 ( x - \delta_m ( t ) )  - \chi_0'' ( x /\ell_m ) u_m^0 ( x )  )\\
 & \ \ \ \ \ \ +  2 \ell_m^{-1} ( 
   \chi_0' ( ( x - \delta_m( t ) ) / \ell_m ) u_m^0 ( x- \delta_m ( t ) ) 
 - \chi_0' ( x / \ell_m ) u_m^0 ( x ) ) .
  \end{split} \]
We note that all the terms are supported in $ ( \frac13 - |\delta_m ( t)| , 
\frac23 + |\delta_m ( t ) |) $ and elementary estimates show that
$ \| ( P ( t ) - \lambda_0^2 ) u ( t ) \| \leq C t $. For instance,
\[ \begin{split} 
\| \chi_0'' ( x ) ( u_m^0 ( x - \delta_m ( t )  ) - u_m^0 ( x ) ) 
\| & \leq C' | \delta_m ( t ) | \max_{ | x - \frac12| \leq
\frac16 + | \delta_m ( t) }| \partial_x u^0_m ( x ) | 
 \\
& \leq C' | \delta_m ( t ) | ( \| - \partial_x^2 u_0^m  \|_{L^2 ( ( \frac14, 
\frac34) ) } + \| u_0^m \|_{L^2 ( ( \frac14, \frac34) ) } ) \\
& \leq C'' ( \lambda_0^2 + 1 ) t. \end{split} \]
From \eqref{eq:q2r} we conclude (after decreasing $ \gamma $ and $ t_0 $)
that \eqref{eq:Ptq2r} holds. 
\end{proof}

\noindent
{\bf Remarks.} 1. A slightly sharper statement than \eqref{eq:q2r} can 
already be obtained from the proof in \S \ref{s:q2r}. It is possible
that in fact $ \Res ( P ) \cap D ( \lambda_0 , C_0 \epsilon  ) $
where $ C_0 $ depends on $ \mathcal L, R$ and $ \delta$. That
is suggested by the fact that the converse to this stronger conclusion is valid -- see Proposition \ref{p:converse}. This improvement would require 
finer
complex analytic arguments. It is interesting to 
ask if methods more specific to quantum graphs, in place of our general 
methods, could produce this improvement.

\noindent
2. By adapting Stefanov's methods \cite{St} one can strengthen the 
conclusion by adding adding a statement about multiplicities (see 
also \cite[Exercise 7.1]{res}) but again we opted for a simple presentation.

\smallsection{Acknowledgements}
 We are grateful for the support of 
National Science Foundation under the grant DMS-1500852. We would
also like to thank Semyon Dyatlov for helpful discussions and assistance
with figures.

\section{A Fermi golden rule for boundary value problems: surfaces with cusps}
\label{cusps}

To illustrate the Fermi golden rule in the setting of boundary value
problems we consider surfaces, $ X $, with cusps of constant negative curvature. 
That means that $ ( X , g ) $ is a surface with 
a smooth boundary and a decomposition (see Fig.~\ref{f:cusp})
\begin{gather}
\label{eq:X0X1}
\begin{gathered} 
  X = X_1 \cup X_0 , \ \ \partial X_0 = \partial X_1 \cup \partial X ,  \ \ 
\partial X_1 \cap \partial X_0 = \emptyset , \\
(X_1 , g|_{X_1} ) \simeq ( [ a, \infty )_r \times (\RR/ \ell \ZZ)_\theta , 
dr^2 + e^{-2r} d\theta^2 ) . 
\end{gathered}
\end{gather}
We consider the following family of unbounded operators on $ L^2 ( X )$:
\begin{gather}
\label{eq:Poft}
\begin{gathered}  P (t ) = - \Delta_g - {\textstyle{\frac14}} , 
\ \ \
\mathcal D ( P ( t ) ) = 
\{ u \in H^2 ( X ) : \partial_\nu u |_{\partial X } = \gamma ( t ) u |_{\partial X }
\} .
\end{gathered}
\end{gather}
where $ t \mapsto \gamma ( t ) \in \CI ( \partial X ) $ is a smooth 
family of functions on $ \partial X $ and $ \partial_\nu $ is the outward 
pointing normal derivative. The spectrum
of the operator $ P $ has the following well known decomposition:
\begin{gather*} \Spec ( P ) = \Spec_{\rm{pp}}( P ) \cup \Spec_{\rm{ac} } ( P ) , \ \ 
\Spec_{\rm{ac} } ( P ) = [ 0 , \infty ) , \\ \Spec_{\rm{pp} } ( P ) = 
\{ E_j \}_{ j=0}^{J} , \ \ -\textstyle{ \frac14} \leq E_0 < E_1 \leq E_2 \cdots ,  
\ \ 0 \leq J \leq + \infty . \end{gather*}
(When $ J = + \infty $ then $ E_j \to \infty $.) The eigenvalues $ E_j > 0 $
are {\em embedded} in the continuous spectrum. In addition the resolvent
$R ( \lambda ) := ( P -\lambda^2 )^{-1} : L^2 \to L^2  $, 
$ \Im \lambda > 0 $, has a meromorphic continuation to $ \lambda \in \CC $
as an operator $ R ( \lambda ) : \CIc ( X ) \to \CI ( X ) $. Its poles 
are called {\em scattering resonances}. Under generic
perturbation of the metric in $ X_0 $ all embedded eigenvalues become 
resonances. For proofs of these well known facts see \cite{cdv} and also
\cite[\S 4.1 (Example 3), \S 4.2 (Example 3), \S 4.4.2]{res} 
for a presentation from the point
of view of {\em black box scattering} \cite{SZ1}. 

\begin{SCfigure}
\includegraphics[width=3.3in]{fgqg.1}
\caption{\label{f:cusp}
A surface with one cusp end and a boundary. Suppose we consider 
a family of boundary conditions for the Laplacian $ - \Delta $: 
$ \partial_\nu w = \gamma ( t ) w $ at $ \partial X $. 
The Laplacian has continuous spectrum with a family of generalized
eigenfuctions $ e ( \lambda ) \in \CI ( X ) $ -- see \eqref{eq:Eis}.
Suppose that for $ t = 0 $, $ \lambda^2 $ is a {\em simple embedded}
eigenvalue of $ - \Delta $ with the boundary condition 
$ \partial_\nu w = \gamma ( 0 ) w $, with the normalized eigenfunction
given by $ u $. Then $ \lambda = \lambda ( 0 ) $ belong to a smooth
family of {\em resonances} of Laplacians with boundary 
condition $ \partial_\nu w = \gamma ( t ) w $, and $ 
\Im \ddot \lambda = -\frac{1}{ 4 \lambda^2} | \langle \dot \gamma u , 
e ( \lambda ) \rangle|^2 $ -- see Theorem \ref{t:1}.}
\end{SCfigure}

The generalized eigenfunctions, $ e ( \lambda, x ) $, 
describing the projection onto the continuous
spectrum have the following properties:
\begin{gather}
\label{eq:Eis}
\begin{gathered}
( P - \lambda^2 ) e ( \lambda , x ) = 0 , \ \ \ \ \frac{ 1 } {\ell} \int_0^\ell 
e( \lambda , x )|_{X_1 } d \theta = e^{\frac r 2 }\left( e^{ - i \lambda r } + s ( \lambda ) e^{ i \lambda r } \right) , \\
( R ( \lambda ) - R ( - \lambda ) ) f = {\textstyle{\frac{i}{2 \lambda} } } e ( \lambda , x ) \langle f , e ( \lambda , \bullet ) \rangle, \ \ \lambda \in \RR , \ \ f \in \CIc ( X ) ,
\end{gathered}
\end{gather}
see \cite[Theorem 4.20]{res}. With these preliminaries in place 
we can now prove
\begin{theo}
\label{t:1}
Suppose that the operators $ P ( t ) $ are defined by \eqref{eq:Poft}
and that $ \lambda > 0 $ is a {\em simple} eigenvalue of $ P ( 0 ) $ and
$ ( P ( 0 )- \lambda^2  ) u = 0 $, $ \| u  \|_{ L^2 } = 1 $. 

Then there exists a smooth function $ t \mapsto \lambda ( t ) $, $ |t | < t_0 $, such
that $ \lambda ( 0 ) = \lambda $, $ \lambda ( t ) $ is a scattering resonance of $ P ( t ) $ and 
\begin{gather}
\label{eq:FGR1}
\begin{gathered}
\Im \ddot \lambda = - \frac{ \! 1 }{ 4 \lambda^2 } \left| \langle \dot \gamma u , e \rangle_{ 
L^2 ( \partial X ) } \right|^2 , \  \   
 \ e ( x ) = e( \lambda, x ) , 
\end{gathered}
\end{gather}
where $ e ( \lambda, x ) $ is given in \eqref{eq:Eis}, 
$ \dot f := \partial_t f|_{ t=0 } $ and $ L^2 ( \partial X ) $ is defined using
the metric induced by $ g $.
\end{theo}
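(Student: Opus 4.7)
The plan is to reduce the problem to analytic perturbation theory for a complex-scaled operator and then extract $\Im\ddot\lambda$ from the jump formula in \eqref{eq:Eis}. First I would complex-scale in the cusp $X_1$: with a smooth $\eta$ on $[a,\infty)$ that vanishes near $a$ and equals $1$ outside a compact set, deform $r\mapsto r+i\theta\int_a^r\eta(s)\,ds$ to obtain a family $P_\theta(t)$ whose eigenvalues in a sector coincide with scattering resonances of $P(t)$ (cf.\ \cite[Ch.~5]{res} for the cusp setting). Since $\partial X\subset X_0$ lies in the unscaled region, the Robin boundary condition $\partial_\nu w=\gamma(t)w$ is carried over verbatim, and since $u\in L^2$ decays exponentially in the cusp, $u$ remains an eigenfunction of $P_\theta(0)$ with the same simple eigenvalue $\lambda^2$.

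To absorb the $t$-dependence of $\mathcal D(P(t))$, I would pass to the sesquilinear form $q_t^\theta(w,v)=\langle\nabla_\theta w,\nabla_\theta v\rangle_\theta-\gamma(t)\langle w,v\rangle_{L^2(\partial X)}-\tfrac14\langle w,v\rangle$ on the fixed form domain $H^1(X)$, which depends analytically on $t$ and $\theta$. Kato's perturbation theory then gives a smooth family $E(t)=\lambda(t)^2$ of simple eigenvalues of $P_\theta(t)$ with $E(0)=\lambda^2$. Biorthogonal perturbation theory with left eigenfunction $\bar u$ (justified by the $J$-self-adjointness of $P_\theta(0)$, with $J$ the conjugation in $X_0$) yields the first-order identity $\dot E=-\int_{\partial X}\dot\gamma|u|^2\,d\sigma\in\RR$, and a second-order identity in which the entire imaginary part is collected in a term of the form $-2\langle G_\theta(\lambda^2)\,\dot\gamma u|_{\partial X},\,\dot\gamma u|_{\partial X}\rangle_{\partial X}$, where $G_\theta(\lambda^2)$ is the reduced resolvent of $P_\theta(0)$ at $\lambda^2$, interpreted via the trace pairing $H^1\hookrightarrow L^2(\partial X)$.

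The decisive step is then to identify $\Im G_\theta(\lambda^2)$. On functions supported in $X_0$, $G_\theta$ equals the boundary value of the meromorphic continuation of $R(\lambda)=(P(0)-\lambda^2)^{-1}$ from $\Im\lambda>0$, with $\theta$ oriented so that resonances lie in the lower half plane. The formula in \eqref{eq:Eis} gives $R(\lambda)-R(-\lambda)=\tfrac i{2\lambda}\,e(\lambda)\otimes\overline{e(\lambda)}$ for $\lambda>0$, and self-adjointness of $P(0)$ turns this into $\Im R(\lambda^2+i0)=\tfrac1{4\lambda}\,e(\lambda)\otimes\overline{e(\lambda)}$. Substituting yields $\Im\ddot E=-\tfrac1{2\lambda}\bigl|\langle\dot\gamma u,e(\lambda)\rangle_{L^2(\partial X)}\bigr|^2$, and since $\dot E$ is real and $E=\lambda^2$ the relation $\Im\ddot\lambda=\Im\ddot E/(2\lambda)$ delivers \eqref{eq:FGR1}.

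The hard part will be handling the singular character of the perturbation: $\dot P$ acts as a $\delta$-function on $\partial X$ rather than as a bounded operator, so both the domain-change issue and the meaning of $G_\theta\,\dot\gamma u|_{\partial X}$ must be made rigorous. The quadratic-form formulation resolves this cleanly, since by the trace theorem $H^1(X)\hookrightarrow L^2(\partial X)$ the boundary form $\dot q(w,v)=-\dot\gamma\langle w,v\rangle_{\partial X}$ is bounded and relatively form-compact with respect to $q_0^\theta$, so all the perturbative manipulations are legitimate on the dual pairing. A secondary technical point is verifying the sign convention that identifies $G_\theta$ with the $+i0$ boundary value of $R$, which is standard once the scaling contour and its orientation are fixed.
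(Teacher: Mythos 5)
Your route---analytic dilation to turn the resonance into an isolated eigenvalue of a non-self-adjoint family, Kato/Rayleigh--Schr\"odinger to second order in the form sense, and then Stone's formula $\Im R(\lambda^2+i0)=\frac{1}{4\lambda}e(\lambda)\otimes\overline{e(\lambda)}$ to extract the imaginary part of the reduced resolvent---is genuinely different from the paper's. The paper never complex-scales: it works directly with the smooth family of \emph{outgoing} resonant states $u(t)$ (smoothness supplied by Proposition \ref{p:smooth}), extracts $\Im\ddot z$ from the flux identity \eqref{eq:Imz} obtained by integrating by parts against $\indic_{r\le R}u$, solves the boundary value problem for $\dot u$ explicitly as $\dot u=\alpha u+g+R(\lambda)(\dot z u-(P-z)g)$ after checking the solvability condition \eqref{eq:orthg}, and only then invokes \eqref{eq:Eis}. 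Your second-order bookkeeping is consistent with the paper's answer ($\Im\ddot E=-\tfrac{1}{2\lambda}|\langle\dot\gamma u,e\rangle|^2$, $\Im\ddot\lambda=\Im\ddot E/2\lambda$ since $\dot E\in\RR$), and your quadratic-form treatment of the Robin perturbation is a clean way to handle the $t$-dependent domain; the paper instead absorbs the domain change into the boundary data of $\dot u$ and the auxiliary function $g$.

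Two steps, however, are not actually secured. First, the foundational claim that complex scaling in the cusp variable produces a family $P_\theta(t)$ whose discrete eigenvalues are the resonances of $P(t)$ is not covered by the reference you point to: \cite[Chapter 5]{res} treats scaling for Euclidean ends, while cusps are handled there (\S 4.4) through the zero-Fourier-mode model operator, and the meromorphic continuation for cusps in the literature goes through the pseudo-Laplacian \cite{cdv} rather than dilation analyticity. A dilation argument on the zero mode can likely be made to work (the nonzero modes carry a confining term $n^2e^{2r}$ whose real part survives small rotations), but as written this step is asserted, not proved, and it is exactly the step that makes eigenvalue perturbation theory applicable at all. Second, the identification of $G_\theta(\lambda^2)$ with the $+i0$ boundary value of $R$ is more than a sign convention: at an \emph{embedded} eigenvalue $R(\lambda)$ itself has a pole at $\lambda=\lambda_0$ with residue proportional to the eigenprojection, so you must verify that the projection $(I-\Pi)$ removes this pole and that $\langle u,e(\lambda)\rangle=0$ (equivalently $(R(\lambda)-R(-\lambda))u=0$) so that the $\Pi$-correction does not contaminate the Stone-formula term. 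This is precisely the work done by \eqref{eq:orthg} in the paper, and your proposal should make it explicit rather than fold it into the reduced resolvent notation.
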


\noindent
{\bf Remarks.} 1. For recent advances in mathematical study of the Fermi golden rule
in more standard settings of mathematical physics and for numerous references see Cornean--Jensen--Nenciu \cite{cjn}.

\noindent
2. In the case of scattering on constant curvature surfaces with cusps
the Fermi golden rule was explicitly stated by Phillips--Sarnak -- see\cite{phisa} and for a recent discussion \cite{petr}.
For a presentation from the black box point of view see \cite[\S 4.4.2]{res}.

\noindent
3. The proof generalizes immediately to the case of several cusps (which is 
analogous to a quantum graph with several leads), $ ( X_k, g|_{X_k} ) 
\simeq ( [a_k, \infty ) \times \RR/\ell_k \ZZ , dr^2 + e^{-2r} d\theta^2 $, 
$ 1 \leq k \leq K $.
In that case the generalized eigenfunction are normalized using 
\[  \frac{ 1 } {\ell_m} \int_0^{\ell_m} 
e^k ( \lambda , x )|_{X_m } d \theta = e^{\frac r 2 }\left( \delta_{ km } 
e^{ - i \lambda r } + s_{km}  ( \lambda ) e^{ i \lambda r } \right) . \]
The Fermi golden rule for the boundary value problem \eqref{eq:Poft} is
given by 
\begin{equation}
\label{eq:FGR2}
\Im \ddot \lambda = - \frac{ \! 1 }{ 4 \lambda^2 } \sum_{ k=1}^K
\left| \langle \dot \gamma u , e^k \rangle_{ 
L^2 ( \partial X ) } \right|^2 , \  \   
 \ e^k ( x ) = e^k ( \lambda, x ) .
\end{equation}

\begin{proof}
For notational simplicity we assume that $ \gamma( 0 ) 
\equiv 0 $, that is that $ P ( 0 ) $ is the Neumann Laplacian on $ X $. We will
also omit the parameter $ t $ when that is not likely to cause confusion. It
is also convenient to use $ z = \lambda^2 $ and to write $ \langle \bullet, 
\bullet \rangle $ for the $ L^2 ( X , d\vol_g ) $ inner product and 
$ \langle \bullet, \bullet \rangle_{ L^2 ( \partial X ) } $ for the inner
product on $ L^2 ( \partial) $ with the measure induced by the metric $ g $.

We first define the following orthogonal projection:
\begin{gather}
\label{eq:indyk} 
\begin{gathered}  \indic_{ r \geq R } u := \frac{1}{\ell} \int_0^\ell u|_{ X_1 \cap 
\{ r \geq R \}} \,  d\theta , \ \ \ \ \indic_{ r \geq  R} : L^2 ( X ) \to L^2 ( [ R, \infty )
 , e^{-r } dr ) ,  \ \
R > a , \\ \indic_{ r \leq R } := I - \indic_{ r \geq R } , \ \ 
\mathcal H_R := \indic_{r \leq R } L^2 ( X ) . 
\end{gathered}
\end{gather}
The smoothness of scattering resonances arising from a smooth perturbation
of a simple resonance follows from smooth dependence 
of the continuation of $ ( P ( t ) - \lambda^2 )^{-1} $ (see 
Proposition \ref{p:smooth} below for a general argument). Let $ t \mapsto 
u ( t ) $, $ u ( 0 ) = u $ denote a smooth family of resonant states:
\begin{gather}
\label{eq:Ptz}
\begin{gathered}   ( P ( t ) - z (t ) ) u ( t ) = 0 , \ \ \ \frac{1 } { \ell } \int_0^\ell
u ( t ) |_{ X_1 } d\theta = a ( t ) e^{ \frac r 2} e^{ i \lambda ( t ) r } , \\ 
a ( 0 ) = 0 , \ \ \Im \lambda ( t ) \leq 0 , \ \ \lambda(0)^2 = z(0) .   
\end{gathered}
\end{gather}
The second equation in \eqref{eq:Ptz} means that $ u( t ) $ is {\em outgoing}
-- see \cite[\S 4.4]{res}.

The self-adjointness of $ P ( t ) $ and integration by parts for the zero mode
in the cusp show that for $ u = u ( t )$ and $ P = P ( t ) $, 
\begin{equation}
\label{eq:Imz}
\begin{split} 0 & =  \Im \langle ( P - z ) u , \indic_{ r \leq R } u \rangle \\
& = 
- \Im \partial_r ( \indic_{ r \geq R } u ) ( R ) \overline{ \indic_{ r \geq R } u } ( R ) - \Im z \| \indic_{ r \leq  R} u \|_{L^2 ( X ) }^2  . \end{split} \end{equation}
(See \cite[(4.4.17)]{res} for a detailed presentation in the
general black box setting.) Since $ \Im \dot z = 0 $ (as $ \Im z ( t ) \leq 0 $, see
also \eqref{eq:zdot} below) and since $ \indic_{ r \geq R } u ( 0 ) = 0 $, we have
have, at $ t = 0 $, 
$  \Im \ddot z = - 2 \Im \partial_r ( \indic_{ r \geq R } \dot u ) ( R ) \overline{ \indic_{ r \geq R } \dot u } $.
We would like to argue as in \eqref{eq:Imz} but in reverse. However, as 
$ \dot u $ will not typically be in $ \mathcal D ( P ) $ we now obtain boundary 
terms:
\begin{equation}
\label{eq:zddot}  \Im \ddot z =  2 \Im \langle ( P - z ) \dot u , \indic_{r \leq  R} \dot u \rangle
+ 2 \Im  \langle \partial_\nu \dot u , \dot u \rangle_{ L^2 ( \partial X ) }. 
\end{equation}
We now need an expression for $ \dot u$. Since 
$ ( P ( t ) - z ( t ) ) u ( t ) = 0 $, $ \partial_\nu u |_{\partial X } = \gamma u|_{
\partial X } $, we have (at $ t = 0 $),
\begin{equation}
\label{eq:Pminz}  ( P - z ) \dot u = \dot z u  , \ \ \partial_\nu \dot u |_{ \partial X } 
= \dot \gamma u |_{\partial X } . \end{equation}
In addition, differentiation of the second condition in \eqref{eq:Ptz} shows
that $ \dot u $ is outgoing. 

Without loss of generality we can assume that $ u = u ( 0 ) $ is real valued.
Choose $ g \in \bar{C}^\infty ( X, \mathbb R ) $ (real valued, compactly supported and smooth up to the 
boundary) such that $ \partial_\nu g |_{\partial X } = \dot \gamma u |_{\partial X } $.
We claim that
\begin{equation}
\label{eq:orthg} 
\langle \dot z u  - ( P - z ) g , u \rangle = 0 .
\end{equation}
In fact, Green's formula shows that the left hand side of \eqref{eq:orthg}
is equal to $ \dot z  + \int_{\partial X } \dot \gamma  u ^2 $. On the other
hand, using the fact that $ \indic_{ r \leq R } u ( 0 ) = u ( 0 ) $, 
\begin{equation}
\label{eq:zdot}  \begin{split} 
0 & = - \frac{d}{dt} \langle ( P ( t) - z ( t)  ) u ( t) , \indic_{ r \leq R } u ( t) \rangle|_{ t = 0 }  = 
 \langle  \dot z u  - ( P - z )  \dot u ,  u \rangle \\
 & =  \dot z  + \int_{\partial X } \partial_\nu \dot u u = \dot z  + \int_{\partial X } \gamma u^2 . \end{split} \end{equation}

In view of \eqref{eq:orthg}, $ v := g + R ( \lambda ) ( \dot z 
- ( P - z ) g ) $, 
$ \lambda^2 = z $, $ \lambda > 0 $, is well defined, outgoing (see \eqref{eq:Ptz})
and solves the boundary value problem \eqref{eq:Pminz} satisfied by $ \dot u $.
Since the eigenvalue at $ z $ is simple that means that $ \dot u - v $ is a multiple of 
$ u $ (see \cite[Theorem 4.18]{res} though in this one dimensional case this is 
particularly simple). Hence
\begin{equation}
\label{eq:dotu}
\dot u = \alpha u + g + R ( \lambda ) ( \dot z u - ( P - z ) g ) . 
\end{equation}
With this formula in place we return to \eqref{eq:zddot}. First we note that
the first term on the right hand side vanishes: 
\begin{equation}
\label{eq:first} \begin{split} 
\Im \langle ( P - z ) \dot u , \indic_{ r \leq R } \dot u \rangle & = 
\Im \langle \dot z u , \dot u \rangle 
= \dot z \Im \langle u , \alpha u + g + R ( \lambda ) ( \dot z u  - ( P - z ) g ) \rangle 
\\
& = \dot z \Im \alpha + \dot z \Im \langle u , R ( \lambda ) ( \dot z u - ( P - z ) g ) \rangle \\
& = \dot z \Im \alpha  . 
\end{split}\end{equation}

Here we used the fact that $ u $ and $ g $ were chosen to be real. The last
identity followed from \eqref{eq:orthg}. To analyse the second term on the right hand side of 
\eqref{eq:zddot}  we recall some properties of the Schwartz kernel of the resolvent:
\begin{equation}
\label{eq:Reso}
R ( \lambda ) ( x, y ) = R ( \lambda ) ( y , x ) = \overline{ R ( - \overline \lambda ) ( x , y )} , \ \ \lambda \in \CC . \end{equation}
(The first property follows from considering $ \lambda = i k $, $ k \gg 1 $, 
and using the fact that $ \overline { P u } = P \bar u $, and the second from 
considering $ \Im \lambda \gg 1 $, $ z = \lambda^2 $, and noting that $ ( ( P - z )^{-1})^* = ( P - \bar z)^{-1} $.)
Using \eqref{eq:zddot},\eqref{eq:Pminz},\eqref{eq:first},\eqref{eq:dotu},\eqref{eq:Reso},\eqref{eq:zdot} and the fact that $ u $ and $ g $ are real, 
we now see that
\begin{equation}
\label{eq:zddot20} \begin{split}
\Im \ddot z & = 2 \dot z \Im \alpha + 
2 \Im \langle \dot \gamma u , \dot u \rangle_{ L^2 (\partial X) } 
\\ &  = 2 \dot z \Im \alpha + 2 \Im \alpha \langle \dot \gamma u , u \rangle + 
2 \Im \langle \dot \gamma u ,  [ R ( \lambda ) ( \dot z u - ( P - z ) g ) ]|_{\partial X } 
 \rangle_{L^2 ( \partial X ) } \\
& = \textstyle{\frac 1  i} \langle \dot \gamma u , [ ( R ( \lambda ) - R ( - \lambda ) ) ( \dot z u - ( P - z ) g ) ]|_{\partial X } \rangle _{ L^2 ( \partial X )}
. \end{split} \end{equation}
Since $ ( R ( \lambda ) - R ( - \lambda ) ) u = 0 $ we have now use \eqref{eq:Eis}
to see that 
\[  \begin{split}
[ ( R ( \lambda ) - R ( - \lambda ) ) ( \dot z u - ( P - z ) g ) ]|_{\partial X } & = 
-  {\textstyle{\frac i {2 \lambda} } } e( \lambda )|_{ \partial X } 
\int_{ X } \overline{ e ( \lambda )} ( P - z) g \\ 
& = - {\textstyle{\frac i {2 \lambda} } } e( \lambda )|_{ \partial X } 
\int_{\partial X } ( \partial_\nu {\overline e( \lambda ) }g - \partial_\nu g 
\overline{ e ( \lambda ) } ) 
\\
& =  {\textstyle{\frac i {2 \lambda} } } e( \lambda )|_{ \partial X } \langle 
\dot \gamma u , e \rangle_{ L^2 ( \partial X ) } . \end{split} \]
Inserting this into \eqref{eq:zddot20} gives \eqref{eq:FGR1} completing the proof.\end{proof}

\section{Proof of Theorem \ref{t:0}}
\label{th0}

We follow the same strategy as in the proof of Theorem \ref{t:1} but 
with some notational complexity due to the graph structure. 

Let $   H^2 := \bigoplus_{m=1}^{M+K} H^2 ( [ 0 , \ell_m ]) $. Then 
for $ u , v \in H^2 $, $ ( \partial_x^k u)_m := \partial_x^k u_m $, 
\begin{equation}
\label{eq:parts}  \begin{split}  - \langle 
\partial_x^2 f, g \rangle_{L^2 } & = 
\langle 
\partial_x f , 
\partial_x g \rangle_{L^2} - \sum_{ v} \sum_{ e_m \ni v }
\partial_\nu f_m( v ) \bar g_m  ( v )  \\
& = - \langle  f , 
\partial_x^2 g \rangle_{L^2} 
+ \sum_{v} \sum_{e_m \ni v } 
\big(  f_m ( v ) \partial_\nu \bar g_m ( v ) - \partial_\nu f_m ( v ) \bar g_m ( v ) \big).
\end{split}
\end{equation}
We note here that the sum over vertices can be written as a sum over
edges:
\begin{equation}
\label{eq:edgev} \sum_{v } \sum_{ e_m \ni v } \partial_\nu f_m ( v ) \bar g_m ( v ) = \sum_{m=1}^{M+K} \sum_{ \, v \in \partial e_m } \partial_\nu f_m ( v ) \bar g_m ( v ) . 
\end{equation}

Just as in \S \ref{cusps} the domain of the deformed operators will 
change but we make a modification which will keep the Hilbert space on which 
$ \widetilde P ( t ) $ (we change the notation from \S \ref{intr} and 
will use $ P ( t ) $ for a unitarily equivalent operator) acts fixed by changing the lengths in \eqref{eq:amt}. 
For that 
let 
\begin{gather*}  L_t^2 := \bigoplus_{m=1}^{M+K} L^2 ( [0, e^{ - a_m ( t) } \ell_m] ) , \ \   L^2 :=   L^2_0 , \ \ U ( t) :   L_t^2 \to   L^2 ,
\\    [ U ( t ) u ]_m ( y ) := 
e^{ - a_m ( t ) /2 } u_m ( e^{ -a_m(t) } y ) , \ \ 
U(t)^{-1} = U(t)^* .  \end{gather*}
Let $ \widetilde P ( t ) $ be defined in $ L^2_t $ by $ ( \widetilde P (t)  u)_m  = - \partial_x^2 u_m $, 
\[  
\mathcal D ( \widetilde P ( t)  ) = \{ u : u_m \in H^2 ( [ 0 , e^{-a_j ( t )} \ell_m ] ) , \
 u_m ( v ) =  u_\ell ( v ) , \  v \in e_m \cap e_\ell , \ 
\sum_{ e_m \ni v } \partial_{\nu} u_m ( v ) = 0 \} . \]
That is just the family of Neumann Laplace operators on the graph with 
the lengths $ e^{-a_j ( t)  } \ell_j $. 

On $ L^2 $ we define a new family of operators: $ P ( t) := U(t) \widetilde P ( t ) 
U(t)^* $. It is explicitly given by $ [ P ( t ) u ]_m = -e^{2 a_m(t) } \partial_x^2 u_m $, 
\begin{equation}
\label{eq:Poft} \begin{split}
\mathcal D ( P ( t)  ) = \{ u \in H^2 : \; & 
e^{ a_m ( t ) / 2 } u_m ( v ) = e^{ a_\ell ( t ) / 2 } u_\ell ( v ) , 
\    v \in e_m \cap e_\ell ,
\\
& \ \ \ \ 
\sum_{ e_m \ni v } e^{3 a_m(t)/2 } \partial_{\nu} u_m ( v ) = 0 
\} . \end{split} \end{equation}

Using Proposition \ref{p:smooth} from the next section we see
that for small $ t $ there exists a smooth family 
$ t \mapsto u ( t ) \in H^2_{\rm{loc}} $ such that 
\begin{gather}
\label{eq:Ptz1}
\begin{gathered}    ( P ( t) - z( t) ) u ( t ) = 0 , \ \ u_k ( t , x ) = 
a ( t) e^{ i \lambda ( t) x } , \ \ k \geq M+ 1,  \\ \Im \lambda (t ) \leq 0 , 
\ \ \lambda ( 0 )^2 = z , \  \ \lambda ( 0 ) >0 . 
\end{gathered}
\end{gather}
We defined $ \mathcal H_R $ by  \eqref{eq:HR} and denote by 
$ \indic_{ x \leq R }  $ the orthogonal projection $ L^2 \to \mathcal H_R $.

Writing $ P  = P ( t ) $, $ u = u ( t ) $, $ z = z ( t )$ 
we see, as in \eqref{eq:Imz}, that 
\begin{equation}
\label{eq:Imz1} \begin{split} 
0 & = \Im \langle ( P  - z ) u  , \indic_{ x \leq R } u \rangle 
 = - \Im  \sum_{ m=1}^{K} \partial_x u_m ( R ) \bar u_m (  R  ) 
- \Im z \| u \|_{ \mathcal H_R}^2 . \end{split}  \end{equation}
 We recall that $ e_m$, $ 1 \leq m \leq K $ are the infinite edges with unique boundaries.
Hence, using \eqref{eq:parts},  at $ t = 0 $, 
\begin{equation}
\label{eq:zddot2}
\begin{split} \Im \ddot z & = 2 \Im \sum_{ m=1}^{K} \partial_x \dot u_m ( R ) \overline{\dot u}_m (  R ) \\
& = 2 \Im \langle ( P - z ) \dot u , \indic_{x \leq R } \dot u \rangle + 2 \Im 
\sum_{ v }\sum_{ e_m \ni v }
\partial_\nu {\dot u}_m ( v ) \overline{\dot  u}_m ( v ) .
\end{split} 
\end{equation}
We now look at the equation satisfied by $ \dot u $ at $ t = 0 $:
\begin{equation}
\label{eq:dotPt}   \frac{d}{dt} ( P ( t ) - z ( t ) ) u ( t ) = 
2 \dot a ( - \partial_x^2 u )  - \dot z u + ( P - z ) \dot u = 
( 2 \dot a z - \dot z ) u + ( P - z ) \dot u . \end{equation}
Hence, 
\begin{gather}
\label{eq:dotu2} 
\begin{gathered}  ( - \partial_x^2 - z ) \dot u_m = ( \dot z + 2 z \dot a_m ) u_m 
 , \ \ \ \sum_{ e_m \ni v } \partial_\nu \dot u_m ( v ) = - {\textstyle{\frac32}}  \sum_{ e_m \ni v } \dot a_m 
\partial_\nu u_m ( v ) , \\
 \dot u_m ( v ) - \dot u_\ell ( v )  = {\textstyle{\frac12}} 
( \dot a_\ell  - \dot a_m ) u ( v ) , \ \ v \in e_m \cap e_\ell
  . 
\end{gathered}
\end{gather}
We used here the fact that $ u(v) := u_m ( v ) $ does not depend on $ m $.
The second condition can be formulated as $ \dot u_m ( v ) = w ( v ) - 
\frac 12 \dot a_m ( v ) u ( v )$, where $ w := \partial_t ( e^{ a(t)/2 } u ( t ) )
|_{t=0} $ is continuous on the graph.  

To find an expression for $ \dot u $ (similar to \eqref{eq:dotu}) we first find
\[  g \in \bigoplus_{ m = 1}^K \CIc ( [ 0 , \infty ) )  \oplus \bigoplus_{ m = K + 1 }^{M+K} \CI ( [ 0 , \ell_m ] )
 , \]
such that 
\begin{equation}
\label{eq:gioconda} \sum_{ e_m \ni v } \partial_\nu  g_m ( v ) = - 
{\textstyle \frac 32}
\sum_{ e_m \ni v } \dot a_m u_m ( v ) , \ \  g_m ( v ) - g_\ell ( v ) = 
{\textstyle \frac 12} ( \dot a_\ell - \dot a_m ) u ( v ) . 
\end{equation}
We can assume without loss of generality that both $ g $ and $ u $ are real 
valued.

In analogy to \eqref{eq:orthg} we claim that
\begin{equation}
\label{eq:orthg1}
\langle ( \dot z - 2 z \dot a ) u - ( P - z ) g , u \rangle = 0 . 
\end{equation}
In fact,
using  \eqref{eq:parts}, \eqref{eq:dotPt} and \eqref{eq:dotu2} we obtain 
\begin{equation}
\label{eq:zdot1}  \begin{split} 
0 & = - \frac{d}{dt} \langle ( P ( t) - z ( t)  ) u ( t) , \indic_{ x \leq R } u ( t) \rangle|_{ t = 0 }  = 
 \langle  \dot z u - 2 z \dot a  u - ( P - z )  \dot u ,  u \rangle \\
 & =  \dot z  - 2 z \langle \dot a u , u \rangle + 
 \sum_{ v } \sum_{ e_m \ni v } ( \partial_\nu \dot u_m ( v ) u ( v ) - 
 \dot u_m ( v ) \partial_\nu u_m ( v ) ) \\
 & = \dot z - 2 z \langle \dot a u , u \rangle + 
 \sum_{ v } \sum_{ e_m \ni v } ( - {\textstyle \frac 32} \dot a_m \partial_\nu u_m ( v ) u ( v ) - 
 ( w ( v ) - {\textstyle \frac 12} \dot a_m u ( v ) ) \partial_\nu u_m ( v ) ) \\
 & = \dot z - 2 z \langle \dot a u , u \rangle - \sum_v \sum_{ e_m \ni v }
 \dot a_m \partial_\nu u_m ( v ) u ( v ) .
     \end{split} \end{equation}
(We used the continuity of $ u $ and the Neumann condition 
$ \sum_{e_m \ni v } \partial_\nu u_m ( v ) = 0 $.) Since $ g $ and $ u $ 
satisfy the same boundary conditions \eqref{eq:dotu2} and \eqref{eq:gioconda} 
\eqref{eq:orthg1} follows from \eqref{eq:zdot1}. 

As in the derivation of \eqref{eq:dotu} we now see that for some $ \alpha \in \CC $
we have
\begin{equation}
\label{eq:dotu1}  \dot u = \alpha u + g + R ( \lambda ) ( \dot z u - 2 z \dot a u - ( P - z ) g ) .\end{equation}
With this in place we return to \eqref{eq:zddot2}. The first term on the 
right hand side is 
\begin{equation*}
 \begin{split} 
2 \Im \langle ( P - z ) \dot u , \indic_{ x \leq R } \dot u \rangle & = 
2 \Im \langle \dot z u - 2 z \dot a u , \dot u \rangle 
\\
& 
= 2 \Im \langle \dot z  u - 2 z \dot a u , \alpha u + g + R ( \lambda ) ( \dot z u 
- 2 z \dot a u - ( P - z ) g ) \rangle 
\\
& = 2\Im \alpha ( \dot z - 2 z \langle \dot a u , u \rangle ) \\
& \ \ \ \ \ \ \  
- 4 z \Im \langle 
\dot a u , R ( \lambda ) ( \dot z u - 2 z \dot a u - ( P - z ) g ) \rangle . \\
\end{split}\end{equation*}
(We used here the simplifying assumption that $ g $ and $ u $ are real valued.)

As in \eqref{eq:zddot20} we conclude that
\begin{equation*}
\label{eq:zddot22}\begin{split}  4 z \Im \langle 
\dot a u , R ( \lambda ) ( - \dot z u + 2 z \dot a u + ( P - z ) g ) \rangle  
& = 
  {\textstyle{\frac {2 z}  i } } \langle \dot a u , [ ( R ( \lambda ) - R ( - \lambda ) ] ( 2 z \dot a u + ( P - z ) g ) \rangle 
\end{split} \end{equation*}
Now, as in \eqref{eq:Eis}, \cite[Theorem 4.20]{res} shows that 
\begin{equation}
\label{eq:Rlaek}  ( R ( \lambda ) - R ( - \lambda ) ) f = {{\frac{i}{2 \lambda} } } \sum_{ k=1}^K  e^k ( \lambda , x ) \langle f , e^k ( \lambda , \bullet ) \rangle, \ \ \lambda \in \RR , \ \ f \in \mathcal H_R   ,
\end{equation}
which means that (with $ z = \lambda^2 $ and $ e^k = e^k ( \lambda ) $)
\[ \begin{split} &  {\textstyle{\frac { 2z} i } } \langle \dot a u , [ ( R ( \lambda ) - R ( - \lambda ) ] ( 2 z \dot a u + ( P - z ) g ) \rangle  = \\ 
& \ \ \ \ \ \ \ \ - 2 {  \lambda^3 } 
\sum_{ k=1}^K |\langle \dot a u , e^k  \rangle |^2 -
  \lambda  \sum_{ k=1}^K \langle \dot a u , e^k\rangle \langle 
e^k ( \lambda ) , ( P - z ) g \rangle . 
\end{split} \]
The second term on the right hand side is now rewritten using \eqref{eq:parts}
and the boundary conditions \eqref{eq:gioconda}:
\[ \begin{split} & \lambda \sum_{k=1}^K \langle \dot a u , e^k \rangle
\left( \sum_{ v } \sum_{ e_m \ni v } ( \partial_\nu e^k_m ( v ) g_m ( v ) - 
\partial_\nu g_m ( v ) e^k ( v ) ) \right) = \\
& \ \ \ \ \ \
 \lambda \sum_{k=1}^K \langle \dot a u , e^k  \rangle
\left( \sum_{ v } \sum_{ e_m \ni v }
{\textstyle\frac12} \dot a_m 
 ( - \partial_\nu e^k_m ( v )  u( v ) +
3 \partial_\nu u_m ( v ) e^k ( v ) ) \right) . \end{split}
 \]
We conclude that
\begin{equation}
\label{eq:ffirst}
\begin{split} 
& 2 \Im \langle ( P - z ) \dot u , \indic_{ x \leq R } \dot u \rangle = 
 2 \Im \alpha ( \dot z - 2 z \langle \dot a u , u \rangle ) 
 - 2 \lambda^3 \sum_{ k=1}^K |\langle \dot a u , e^k \rangle |^2 \\
 & \ \ \ \ \ \ \
   - 2 \lambda \sum_{k=1}^K \langle \dot a u , e^k  \rangle
\left( \sum_{ v } \sum_{ e_m \ni v }
{\textstyle\frac14} \dot a_m 
 (  3 \partial_\nu u_m ( v ) e^k ( v )  - \partial_\nu e^k_m ( v )  u( v )  ) \right) 
.
\end{split}
\end{equation}
A similar analysis of the second term on the right hand side of 
\eqref{eq:zddot} shows that 
\begin{equation}
\label{eq:fffirst}
\begin{split}
& 2 \Im \sum_{ v }\sum_{ e_m \ni v } \partial_\nu {\dot u}_m ( v ) \overline{\dot  u}_m ( v ) = \Im \alpha 
\left( 2 \sum_v \sum_{ e_m \ni v }
 \dot a_m \partial_\nu u_m ( v ) u ( v ) \right) \\ 
 & \ \ \ - 
 2 \lambda^{-1}  \sum_{ k=1}^K \left| \sum_{ v } \sum_{ e_m \ni v }
{\textstyle\frac14} \dot a_m 
 ( \partial_\nu e^k_m ( v )  u( v ) - 
3 \partial_\nu u_m ( v ) e^k ( v ) )  \right|^2 \\
& \ \ \ - 2 \lambda \sum_{ k=1}^K \langle \overline {  \dot a u, e^k } \rangle
\left( \sum_{ v } \sum_{ e_m \ni v }
{\textstyle\frac14} \dot a_m 
 ( 3 \partial_\nu u_m ( v ) \overline e^k ( v ) - \partial_\nu \overline e^k_m ( v )  u( v )  ) \right) .
\end{split}
\end{equation}
Inserting \eqref{eq:ffirst},\eqref{eq:fffirst} into \eqref{eq:zddot2}, 
using \eqref{eq:zdot1} and $ \Im \ddot z = 2 \lambda \Im \ddot \lambda $ gives
\eqref{eq:FGR0}.
\qed

\noindent
{\bf Example 1.} 
Consider a connected graph with \(M\) bonds and \(K\) leads. Suppose that an embedded eigenvalue \(\lambda\) is simple and satisfies 
\begin{equation}\label{eq:ex_cond}
\lambda \ell_m \in \pi \mathbb{Z},~m=1,\cdots,M.
\end{equation}  
Then 
\begin{equation}\label{eq:ex_thm}
\Im \ddot \lambda = -\sum_{k=1}^K \left| \lambda\langle \dot a u, e^k (\lambda) \rangle \right|^2.
\end{equation}
\begin{proof}  
\(u_m(x)=C_m \sin(\lambda x)\) where \(e_m\) and a lead are meeting at a vertex. Since the graph is connected, \(u_m(x)=C_m \sin(\lambda x)\) for \(1\le m \le M\). Let
\(n_m=\frac{\lambda \ell_m}{\pi}\) and let \[e^k_m(\lambda,x) = A_{mk}\sin(\lambda x) + B_{mk} \cos (\lambda x) .\]
Then $ u_m(\ell_m) =u_m(\ell_m)=0 $ and 
\[ 
\partial_\nu u_m (0) = (-1)^{n_m+1}\partial_\nu u_m (\ell_m) , \
e_m^k (\lambda,\ell_m) = (-1)^{n_m} e_m^k (\lambda,0). 
\]
We can use this and 
\eqref{eq:edgev} to reduce \(F_k\) in \eqref{eq:FGR0} to 
\begin{align*}
F_k& = 
\lambda \langle \dot a u , e^k ( \lambda ) \rangle 
+ \lambda^{-1} \sum_{ m=1 }^M {\textstyle \frac34} \dot a_m 
( \partial_\nu u_m ( 0)  \overline{ e^k ( \lambda , 0 )} +\partial_\nu u_m ( \ell_m)  \overline{ e^k ( \lambda , \ell_m )} )
\\&=\lambda \langle \dot a u , e^k ( \lambda ) \rangle .
\end{align*}
Theorem \ref{t:0} then gives \eqref{eq:ex_thm}. \end{proof}

\noindent
{\bf Example 2.} Let us consider a graph with $ M =5, K = 2 $ and 
four vertices: see in Fig.~\ref{f:ex2}.  
Let \(\ell_m(0)=1,~ 1\le m \le 5\). Then the sequence of 
embedded eigenvalues \(\lambda\) is given as \(S_1 \cup S_2\) where
\[S_1=\pi\mathbb{Z}, \quad S_2= \left\{\lambda  \; :\; \tan\lambda +2\tan {\textstyle{\frac{\lambda}{2}}} =0, \quad \lambda\notin {\textstyle{\frac{\pi}{2} }} \mathbb{Z}\right\}.\]
If \(\lambda \in S_1\), then \eqref{eq:ex_cond} is satisfied. If \(\lambda \in S_2\), however, we have (with $ v_1 $ and $ v_2 $ corresponding to $ x = 0 $ 
for $ e_3 $, $ e_6 $ and $ e_4 $, $e_5 $ respectively, and $ v_4 $ to $ x = 0 $
for $ e_7 $)
\[u_3(x)=C\sin(\lambda x),\quad u_4(x)=C \sin(\lambda x),\quad u_5(x)=-C\sin(\lambda x),\]
\[u_6(x)=-C\sin(\lambda x),\quad u_7(x)= C\frac{\sin \lambda}{\sin \frac{\lambda}{2}} \sin\left(\lambda \left(x-{\textstyle{\frac{1}{2}}}\right)\right), \] where \(C>0\) is the normalization constant.
Note that \[u(v_3)=C\sin\lambda\ne 0, \quad u(v_4)=-C\sin\lambda \ne 0.\] So we do not have the simple formula \eqref{eq:ex_thm} in this case. 

\begin{figure}
\includegraphics[width=\textwidth]{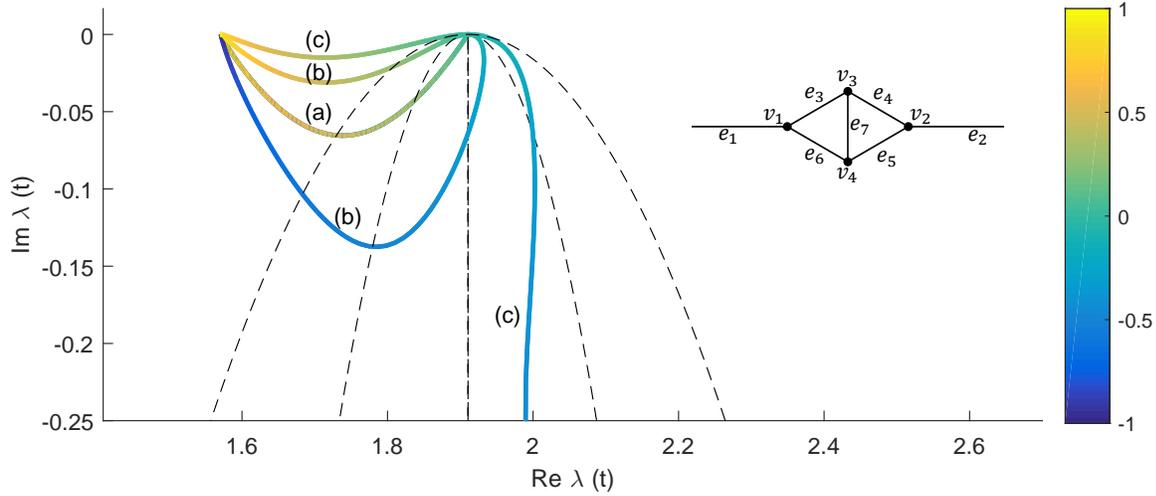}
\centering
\caption{\label{f:ex2}
The graph from Example 2: in this case boundary terms in 
our Fermi golden rule appear at some embedded eigenvalues
such as $ \lambda_0$ which is the smallest solution of 
$ \tan \lambda + 2 \tan \frac{\lambda} 2 = 0 $, $ 
\lambda_0 \approx 1.9106 $. We consider the following
variation of length: $ \ell_3=1-t, \ell_4=1+t, \ell_5=1-t, \ell_6=1+t $, 
and (a): $ \ell_7=1 $ (b): $ \ell_7=1+t/2 $ (c): $ \ell_7=1+t $.}
\end{figure}

\section{Proof of Theorem \ref{t:q2r}}
\label{s:q2r}

The proof adapts to the setting of quantum graphs and of quasimodes $ u$
satisfying \eqref{eq:quasim} the arguments of \cite{tz}. They have 
origins in the classical work of Carleman \cite{Car}
on completeness of eigenfunctions for classes non-self-adjoint operators, see also \cite{St} and \cite{StVo}.
 
We start with general results which 
are a version of the arguments of
\cite[\S 7.2]{res}. In particular they apply without modification to 
quantum graphs with general Hamiltonians and general boundary conditions.
We note that for metric graphs considered here much
more precise estimates are obtained by Davies--Pushnitski \cite{DaPu}
and Davies--Exner--Lipovsk\'y \cite{exda} but since we want uniformity 
we present an argument illustrating the black box point of view.

\begin{prop}
\label{p:reso}
Suppose that $ P$ satisfies the assumptions of 
Theorem \ref{t:q2r} and $ \Omega_1 \Subset \Omega_2 \Subset \CC $,
where $ \Omega_j $ are open sets. 

Then there exist constants $ C_1 $ depending only on $ \Omega_2$ and $ \mathcal L $, 
and $ C_2 $ depending on $ \Omega_1, \Omega_2 $, $ R $ and $ \mathcal L $ such that 
\begin{gather}
\label{eq:res_bound}
\begin{gathered}
| \Res ( P )\cap \Omega_2 | \leq C_1 ,\\
\| \indic_{r \leq R } R ( \lambda ) \indic_{r \leq R } \|_{ L^2 \to L^2}
\leq C_2  \prod_{\zeta \in \Res ( P )\cap \Omega_2  } | \lambda - \zeta |^{-1} , \ \ \lambda \in \Omega_1 ,
\end{gathered}
\end{gather}
where the elements of $ \Res ( P ) $ are included according to their multiplicities.
\end{prop}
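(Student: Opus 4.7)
\smallskip
\noindent
The plan is to follow the black box strategy outlined in \cite[\S 7.2]{res}, constructing a meromorphic parametrix for $R(\lambda) = (P-\lambda^2)^{-1}$ whose singularities are governed by a Fredholm determinant, and then extracting both the counting bound and the resolvent estimate from Jensen's formula and the minimum modulus theorem.

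\smallskip
\noindent
\textbf{Step 1: Parametrix construction.} First I would fix $R_0 > R$ and choose cutoff functions $\chi_0, \chi_1 \in \CIc$ on each lead and on $\mathcal{H}_{R_0}$ adapted to the interval $[0,R_0]$, with $\chi_0 \equiv 1$ on a neighbourhood of the compact part plus $[0, R_0/2]$ of each lead and $\chi_1 \equiv 1$ on a slightly larger set. Let $R_{\mathrm{ref}}(\lambda)$ denote a reference resolvent obtained by combining the resolvent of a Dirichlet (or similar) Laplacian on the compact piece together with the free outgoing resolvent $R_0(\lambda)$ on each half-line; both continue meromorphically (indeed holomorphically in $\CC$ for the free half-line case with Dirichlet at $R_0$). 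Then
\[
Q(\lambda) := \chi_0 R_{\mathrm{ref}}(\lambda)\chi_1 + (\text{glueing in the leads})
\]
satisfies $(P-\lambda^2) Q(\lambda) = I + K(\lambda)$, where $K(\lambda)$ is a finite-rank operator (its Schwartz kernel is supported in the commutator region where $[-\partial_x^2, \chi_j]$ lives, a compact set independent of $\lambda$). Since edges of $\mathcal{L}$ are bounded uniformly away from $0$ and $\infty$, the cutoffs and $R_{\mathrm{ref}}(\lambda)$ can be chosen with bounds depending only on $\mathcal{L}$ and on the spectral window.

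\smallskip
\noindent
\textbf{Step 2: Determinant and counting.} Because $K(\lambda)$ is of finite rank uniformly in $\lambda$ (or at worst trace class with uniform bounds), the Fredholm determinant $D(\lambda) := \det(I + K(\lambda))$ is entire, $|D(\lambda)| \leq \exp(C_{\mathcal L, \Omega_2})$ on $\Omega_2$, and its zeros coincide with the poles of $R(\lambda)$ on the physical sheet, counted with multiplicity (this is the standard Gohberg-Sigal argument, applied as in \cite[Appendix C \& \S 7.2]{res}). Enclose $\Omega_2$ in a disk $D(\lambda_*, r_2)$ and pick $\lambda_0 \in D(\lambda_*, r_2) \setminus \Omega_2$ where $|D(\lambda_0)|$ admits a lower bound depending only on $\mathcal L, \Omega_2$ (for example, by choosing $\Im\lambda_0$ large where $\|K(\lambda_0)\|$ is small). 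Jensen's formula applied to $D$ on $D(\lambda_*, r_2)$ then yields the counting bound $|\Res(P) \cap \Omega_2| \leq C_1 = C_1(\Omega_2, \mathcal{L})$.

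\smallskip
\noindent
\textbf{Step 3: Cutoff resolvent bound.} Writing $\indic_{x \leq R} R(\lambda) \indic_{x \leq R} = \indic_{x \leq R} Q(\lambda) (I+K(\lambda))^{-1} \indic_{x \leq R}$ and using $(I+K(\lambda))^{-1} = D(\lambda)^{-1} \tilde{A}(\lambda)$, where $\tilde A$ is the Fredholm adjugate (holomorphic with $\|\tilde A(\lambda)\| \leq e^{C}$ on $\Omega_2$), the problem reduces to a lower bound for $|D(\lambda)|$ on $\Omega_1$. Applying the Cartan / minimum modulus theorem to $D$ on the annular region $\Omega_1 \Subset \Omega_2$ (again using the upper bound on $|D|$ and the anchor value at $\lambda_0$) gives
\[
|D(\lambda)| \geq c_2 \prod_{\zeta \in \Res(P) \cap \Omega_2} |\lambda - \zeta|, \qquad \lambda \in \Omega_1,
\]
with $c_2$ depending on $\Omega_1, \Omega_2, \mathcal L$, after removing small disks; the removal is compensated (as in \cite[Lemma 7.3]{res}) by a maximum principle argument. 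Together with the uniform bound on $\|\indic_{x\leq R} Q(\lambda) \tilde A(\lambda) \indic_{x\leq R}\|$ (here $R$ enters through the support of the cutoffs), this gives the claimed resolvent estimate \eqref{eq:res_bound} with $C_2 = C_2(\Omega_1, \Omega_2, R, \mathcal{L})$.

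\smallskip
\noindent
\textbf{Expected obstacle.} The routine analytic Fredholm machinery is standard; the delicate point is ensuring all constants are \emph{uniform} in $\ell_m \in \mathcal L$. This requires that the parametrix, the trace norm bounds on $K(\lambda)$, and the anchor estimate $|D(\lambda_0)| \geq c$ all be shown to depend continuously on the edge lengths. Compactness of $\mathcal L$ in $(0,\infty)$ saves the day: the reference operator and cutoffs can be chosen from a compact family, so continuity of the construction in $(\ell_m)$ turns pointwise bounds into uniform ones.
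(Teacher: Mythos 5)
Your proposal follows the same black box architecture as the paper's proof: a parametrix $Q(\lambda)$ glued from a free outgoing resolvent on the leads and a reference resolvent on the compact part, the identity $\indic_{x\le R}R(\lambda)\indic_{x\le R}=\indic_{x\le R}Q(\lambda)(I+K(\lambda))^{-1}\indic_{x\le R}$, an upper bound on $\det(I+K)$ plus a lower bound at an anchor point with $\Im\lambda_0\gg 1$, Jensen's formula for the counting bound, and a lower bound $|\det(I+K(\lambda))|\ge c\prod|\lambda-\zeta|$ for the resolvent estimate. Two remarks on the differences. First, for the last step the paper factors $\det(I+K)=e^{g(\lambda)}\prod(\lambda-\zeta)$ and bounds $g$ by Borel--Carath\'eodory, then uses $\|(I+A)^{-1}\|\le \det(I+|A|)/|\det(I+A)|$; your Cartan/minimum-modulus route leaves exceptional disks that you must indeed repair by exactly this factorization argument, so you should state it that way from the start (and note that $K$ is trace class, not finite rank --- your hedge is the correct one). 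Second, and more importantly, the uniformity in $\ell_m\in\mathcal L$ is the entire point of the proposition, and it is the one place where you substitute a soft claim (``continuity of the construction in $(\ell_m)$ plus compactness'') for an actual argument. The paper makes this concrete: it introduces Neumann reference operators $\widetilde P$ on $\mathcal H_{3R}$, proves a bracketing lemma (eigenvalues are monotone in the edge lengths, via the first-variation formula $\mu_p'(0)=\mu_p(0)\,\ell_k(a^2+b^2)\ge 0$), and combines this with the Weyl law to get $\|\chi_3 A\chi_3\|_{\tr}\le C(\mathcal L)(\|P\chi_3A\chi_3\|+\|\chi_3A\chi_3\|)$, from which the uniform trace bound on $K$ follows. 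Your compactness argument can in principle be made to work, but it is not free: the operators act on Hilbert spaces that change with $(\ell_m)$, so you must first conjugate to a fixed space (as the paper does in \S 3 with the unitaries $U(t)$) and then verify that the trace norm of $K(\lambda,\lambda_0)$ and the anchor quantity $|\det(I+K(\lambda_0,\lambda_0))|^{-1}$ depend continuously on the lengths --- continuity in operator norm is easy, but continuity in trace norm requires an argument of essentially the same strength as the paper's explicit bound. As written, this step is a gap in an otherwise correct outline.
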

\begin{proof} 
Let 
$  R_0 ( \lambda  ) : \bigoplus_{ k=1}^K L^2_{\comp} ( e_k ) \to 
\bigoplus H^2_{\rm{loc}} \cap H_{0,\rm{loc}} ^1  ( e_k)  $,
be defined as the diagonal operator acting on each component as $ R^0_0 ( \lambda ) $, the Dirichlet resolvent on $ L^2_{\comp} ( 
[ 0 , \infty ) ) $ continued {\em analytically} to all of $ \CC $:
\[ R^0_0 ( \lambda ) f ( x ) = 
 \int_0^\infty \frac{ e^{ i \lambda ( x + y ) } - e^{ i \lambda | x - y | } }
{ 2 i \lambda}   f ( y ) dy . \]
To describe $ \indic_{r \leq R }  R ( \lambda ) \indic_{r \leq R }  $
we follow the general argument of \cite{SZ1} (see also \cite[\S 4.2,4.3]{res}).
For that 
we choose $ \chi_j \in \CIc $, $ j = 0 , \cdots, 3 $ to be 
equal to $ 1 $ on all edges and to satisfy 
\begin{gather*}
\chi_j |_{ e_k} \in \CIc ( [ 0, 2 R ) ) , \ \ \chi_0 |_{ e_k} ( x ) =1 , \  
 x\leq R , \ \ 
\chi_j|_{ e_k } ( x )  =  1, \  x  \in \supp \chi_{ j-1}|_{e_k } , 
\end{gather*}
for  $ k =1 , \cdots, K $. For $ \lambda_0 $ with $ \Im \lambda_0 > 0 $,
we define
\[
Q ( \lambda, \lambda_0 ) := ( 1- \chi_0 ) R_0 ( \lambda ) ( 1 - \chi_1 ) 
+ \chi_2 R ( \lambda_0 ) \chi_1 , \ \  Q ( \lambda, \lambda_0 ) : L^2_{\comp} 
\to \mathcal D_{\loc} ( P ) . \]
Then 
\begin{gather*} ( P - \lambda^2 ) Q ( \lambda, \lambda_0 ) = I + K ( \lambda, \lambda_0 ) , \\
K_0 ( \lambda, \lambda_0 ) := - [ P , \chi_0 ] R_0( \lambda ) ( 1 - \chi_1 ) 
+ ( \lambda_0^2 - \lambda^2 ) \chi_2  R ( \lambda_0 ) \chi_1 + 
[ P , \chi_2 ] R ( \lambda_0 ) \chi_1 . \end{gather*}
We now choose $ \lambda_0 = e^{ \pi i /4 } \mu $, 
$ \mu \gg 1 $. Then 
\begin{equation}
\label{eq:Klala0} 
\text{ $ I + K_0 ( \lambda_0 , \lambda_0 ) \ $ and 
$ \  I + K_0 ( \lambda_0 , \lambda_0 ) \chi_3 \  $ are invertible on $ \ L^2 $, }
\end{equation}
$ K ( \lambda , \lambda_0 ) \chi_3 $ is compact, and 
\begin{equation} 
\label{eq:resform}
R ( \lambda ) = Q ( \lambda, \lambda_0 ) ( I + K_0( \lambda, \lambda_0 ) \chi_3)^{-1} ( I - K_0 ( \lambda , \lambda_0 ) ( 1 - \chi_3) ) ,
\end{equation}
where $ \lambda \mapsto ( I + K_0( \lambda, \lambda_0 ) \chi_3)^{-1} $ is
a meromorphic family of operators.
We now put 
\[  K ( \lambda, \lambda_0 ) := K_0( \lambda, \lambda_0 ) \chi_3 \]
and conclude that
\begin{equation} 
\label{eq:resform}
\indic_{r \leq R }  R ( \lambda ) \indic_{r \leq R }  = \indic_{r \leq R }  Q ( \lambda, \lambda_0 ) \chi_3 ( I + K( \lambda, \lambda_0 ) )^{-1} \indic_{r \leq R } ,
\end{equation}
and the set of resonances is given by the poles of $ ( I +  K ( \lambda, \lambda_0 ) )^{-1} $. (See \cite[\S 4.2]{res} and in particular 
\cite[(4.2.19)]{res}.)

We now claim that $ K ( \lambda , \lambda_0 ) $ is 
of trace class for $ \lambda \in \CC $ 
and that for a any compact 
subset   $ \Omega \Subset \CC $ there exists a constant $ C_3 $ depending
only on $ \Omega $, $ \mathcal L $ and $ \lambda_0  $ such that
\begin{equation}
\label{eq:Klala0}    \| K ( \lambda, \lambda_0 )\|_{\tr} \leq C_3 . 
\end{equation}
To see this, let $ \widetilde P $ be the operator of $ \mathcal H_{3R} $ where
we put, say the Neumann boundary condition at $ 3 R $ on each infinite
lead. Let $ \widetilde P_{\min}, \widetilde P_{\max} $ 
be the same operators but  on metric graphs were
all the length $ \ell_j \in \mathcal L $, $ K +1 \leq j \leq K + M $ 
were replaced by $ \ell_{\min} := \min\mathcal L $ and 
$ \ell_{\max} :=\max \mathcal L $ respectively. 
These operators have discrete spectra and the ordered
eigenvalues of these operators satisfy
\begin{equation}
\label{eq:brack}   \lambda_p ( \widetilde P_{\max} ) \leq \lambda_p ( \widetilde P ) 
\leq \lambda_p ( \widetilde P_{\min} ) . 
\end{equation}
This is a consequence of the following lemma:

\begin{lemm}
Suppose that the unbounded operator $ \widetilde P_k ( t ) : \mathcal H_{\rho} \to \mathcal H_{\rho }$, $ \rho > 0 $,  with edge length given by 
\[  \ell_k (t ) = \rho , \ 1 \leq k \leq K, \ \ 
  \ell_m (t) =  e^{-\delta_{mk}  t}  \ell_m , \ \ K+1 \leq m \leq M + K , \]
and 
\[ \mathcal D ( \widetilde P_k ( t ) ) = 
 \{ u : u_m \in H^2 ( [ 0 , \ell_m (t) ] ) , \
u_m ( v ) = u_\ell ( v ) , \  v \in e_m \cap e_\ell , \ 
\sum_{ e_m \ni v } \partial_{\nu} u_m ( v ) = 0 \} . \]
If $ 0 = \mu_0 ( t ) \leq \mu_1 ( t ) \leq \mu_2 ( t ) \cdots , $ 
is the ordered sequence of eigenvalues of $ P ( t ) $,  then 
$ \mu_p ( t ) $ is an increasing function of $ t $.
\end{lemm}
\begin{proof}
From \cite[Theorem 3.10]{BK1} we know that 
if $ \mu $ is an eigenvalue of $ P ( s ) $ of multiplicity $ N$
then we can choose analytic functions $ \mu^n ( t ) \in \RR $, $ u^n ( t ) \in \mathcal D ( P ( s ) ) $,   such that $ \mu^n ( s ) = \mu $, and 
for small $ t-s $, $ P ( t ) u^n ( t ) = \mu^n ( t ) u^n ( t ) $, and $ \{ u^n ( t ) \}_{ n=1}^N $
is an orthonormal setting spanning $ \indic_{ | P ( t ) - \mu | 
\leq \epsilon } L^2 $, for $ \epsilon > 0 $ small enough. 
The lemma follows from showing that $ \partial_t \mu^n ( s) \geq 0 $ 
for any $ n $. 

Without loss of generality we can assume that
$ s = 0$. 
We can then use the same calculation as in \eqref{eq:zdot1}
with $ z  = \mu^n ( 0 ) $, $ a_m ( t ) = \delta_{km} t $
and $ u = u^n ( 0 ) $. That gives
\[  \mu_p ' ( 0 ) = 2 \mu_p ( 0 ) \langle u, u \rangle_{ L^2 ( e_k )} 
+ \sum_{ v \in \partial e_k } \partial_\nu u_k ( v ) u_k ( v ) . \]
Since $ u_k ( x ) = a \sin \sqrt \mu_p x + b \cos \sqrt \mu_p x $ ,
for some $ a , b \in \RR $, a calculation shows that
\[  \mu_p ' ( 0 ) =  \mu_p ( 0 ) \ell_k ( a^2 + b^2 ) \geq 0 , \]
completing the proof.
\end{proof}

The inequality \eqref{eq:brack} follows from the lemma as 
we can change the length of the edges in succession. 
The Weyl law for $ \widetilde P $ (see \cite{BeKu}) and the
fact that $ \widetilde P \chi_3 = P \chi_3 $ (where $ \chi_3 $
denotes the multiplication operator), now shows that 
for any operator $ A :L^2 \to \mathcal D ( P )  $, 
\[ \| \chi_3 A \chi_3 \|_{\tr} 
\leq C_4 \| P \chi_3 A \chi_3 \| + C_4 \| \chi_3 A \chi_3 \| , \]
where the constant $ C_4 $ depends only on $ \mathcal L $. From this
we deduce \eqref{eq:Klala0} and $ \| \bullet \| = \| \bullet \|_{ L^2 
\to L^2 } $.  For instance,
\[ \begin{split}
  \| [ P , \chi_2]  R ( \lambda_0 ) \chi_1 \|_{\tr} & \leq 
C_4  \|  P [ P, \chi_2]  R ( \lambda_0 ) \chi_1 
\|+ C_4  \| [ P ,  \chi_2]  R ( \lambda_0 ) \chi_1  \|
\\
& = 
C_4  \| [ P , [ P, \chi_2 ] R ( \lambda_0 ) \chi_1 \|
+  C_4 ( 1 + |\lambda_0|^2 ) \| [ P , \chi_2]  R ( \lambda_0 ) \chi_1  \| \\
 & \leq C_5 .\end{split}
 \]
Here we used the facts that $ \chi_2 \equiv 1 $ on the support of $ \chi_1 $,
hence $ [ P , \chi_2 ] \chi_1 = 0 $, and that $ [ P, [ P , \chi_2] ] $ 
$ [ P ,\chi_2 ] $ are second and first order operators respectively 
and and that $ R ( \lambda_0 ) $ maps $ L^2 $ to $ \mathcal D ( P ) $. 
The other terms in $ K ( \lambda, \lambda_0 ) $ are estimated similarly
and that gives \eqref{eq:Klala0}. (Finer estimates for large $ \lambda $ are possible -- see \cite[\S 4.3, \S 7.2]{res} and \cite{tz}-- but we concentrate here on uniformity near a given energy.) 

Now, let $ \Omega_3 = \{ \lambda : | \lambda - \lambda_0 | < R $
where $ R $ is large enough so that
$ \Omega_2 \subset \Omega_3 $. 
It follows that for a constant $ C_3 $ depending only on $ \Omega_3 $ and
$ \mathcal L$, 
(and hence only on $ \Omega_2 $), we have 
\begin{equation}
\label{eq:detKla} | \det ( I + K ( \lambda , \lambda_0 ) | \leq e^{C_3} .
\end{equation}
(For basic facts about determinants see for instance \cite[\S B.5]{res}.)
Writing 
\[  ( I + K ( \lambda_0 , \lambda_0 )) ^{-1} = ( I - ( I + K ( \lambda_0, 
\lambda_0 ) )^{-1} K ( \lambda_0, \lambda_0 ) ) \]
we obtain
\begin{equation*}
\begin{split} | \det ( I + K ( \lambda_0 , \lambda_0 ) ) |^{-1} & =
| \det ( I + K ( \lambda_0 , \lambda_0 ) ^{-1} | \\
& \leq \exp \left( \| ( I + K ( \lambda_0 , \lambda_0 ) )^{-1} \| 
\| K ( \lambda_0 , \lambda_0 ) \|_{\tr} \right) \leq e^{C_4} , 
\end{split} \end{equation*}
that is
\begin{equation}
\label{eq:detKla0}
 | \det ( I + K ( \lambda_0 , \lambda_0 ) ) | \geq e^{ - C_4 } , 
 \end{equation}
where $ C_4 $ depends only on $ \lambda_0 $ and $ \mathcal L $.
The Jensen formula (see for instance \cite[\S 3.61]{Tit})
then gives a bound on the number of zeros of $ \det ( I + K ( \lambda, 
\lambda_0 ) )$ in $ \Omega_3 $. That 
proves the first bound in \eqref{eq:res_bound}. 

We can write 
\[ \det ( I + K ( \lambda_0  , \lambda ) ) = e^{ g( \lambda ) } \prod_{\zeta \in \Res ( P )\cap \Omega_3  } ( \lambda - \zeta ) , \]
where $ g ( \lambda ) $ is holomorphic in $ \Omega_3 $. From the 
upper bound \eqref{eq:detKla} and the lower bound \eqref{eq:detKla0} 
we conclude that $ | g ( \lambda ) | \leq C_5 $ in a smaller disc 
containing $ \Omega_2 $, with $ C_5 $ depending only on the 
previous constants. (For instance we can use the Borel--Carath\'eodory
inequality -- see \cite[\S 5.5]{Tit}.)
Hence
\[ | \det ( I + K ( \lambda_0 , \lambda ) | \geq e^{-C_6}  
 \prod_{\zeta \in \Res ( P )\cap \Omega_2  } | \lambda - \zeta | , \ \ \lambda \in \Omega_1 , \]
To deduce the 
the second bound in \eqref{eq:res_bound} from this 
we use the 
inequality 
\[  \| ( I + A )^{-1} \| \leq \frac{ \det ( I + | A | ) }{ |\det ( I + A ) | } \]
which gives
\[ \begin{split} 
\| \indic_{r \leq R} R ( \lambda ) \indic_{ r \leq R } \| & = \| 
\indic_{r \leq R }  Q ( \lambda, \lambda_0 ) \chi_3 ( I + K( \lambda, \lambda_0 ) )^{-1} \indic_{r \leq R } \| \\
& \leq \| \indic_{r \leq R }  Q ( \lambda, \lambda_0 ) \chi_3 \| 
| \det ( I + | K ( \lambda , \lambda_0 ) ) | | \det ( I + 
K ( \lambda , \lambda_0 ) )|^{-1} \\
& \leq C_7 e^{ \| K ( \lambda , \lambda_0 ) \|_{\tr} } 
\prod_{\zeta \in \Res ( P )\cap \Omega_2  } | \lambda - \zeta |^{-1} , 
\end{split} \]
for $ \lambda \in \Omega_1 $ and $ C_7 $ depending only on $ \Omega_j$'s
$ \mathcal L $, and $ R $. This completes the proof.
\end{proof}

Before proving Theorem \ref{t:q2r} we will use the construction 
of the meromorphic continuation in the proof of Proposition \ref{p:reso}
to give a general condition for smoothness of a family of resonances
(see also \cite{St3}):
\begin{equation}
\label{eq:condsm}
 ( P ( t ) - \lambda_0^2 )^{-1} \in \CI ( (-t_0, t_0 ) ; 
 \mathcal L ( L^2, L^2 ) ) , \ \ \Im \lambda_0 > 0 .
\end{equation}
That is the only property used in the proof of 
\begin{prop}
\label{p:smooth} Let $ P ( t ) $ be the family of unbounded 
operators on $ L^2 $ (of a fixed metric graph) defined by \eqref{eq:Poft}. Let $ R ( \lambda, t ) $
be the resolvent of  $P ( t ) $ meromorphically continued to $ \mathbb C$. Suppose
that $ \gamma  $ is a smooth Jordan curve such that $ R ( \lambda, t ) 
$ has no poles on $ \gamma $ for $ |t | < t_0 $. Then for $ \chi_j \in 
\CIc $, $ j=1,2 $, 
\begin{equation}
\label{eq:Rzett} \int_\gamma  \chi_1 R ( \zeta , t ) \chi_2  d \zeta \in 
\CI ( ( - t_0, t_0 ) ; \mathcal L ( L^2 , L^2 ) ) . \end{equation}
In particular, if $ \lambda_0 $ is a simple pole of $ R ( \lambda, 0 ) $
then there exist smooth families $ t \mapsto \lambda ( t ) $ and
$ t \mapsto u ( t ) \in \mathcal D_{\loc} ( P ( t ) ) $ such that
$ \lambda ( 0 ) = \lambda_0 $, $ \lambda ( t ) \in \Res ( P ( t ) )$
and $ u ( t ) $ is a resonant state of $ P ( t ) $ corresponding to $ \lambda ( t )$.
\end{prop}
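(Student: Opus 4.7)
The plan is to insert $t$-dependence into the parametrix construction from the proof of Proposition \ref{p:reso} and to verify that each factor is smooth in $t$; the simple-pole conclusion will then follow by extracting $\lambda(t)$ and $u(t)$ through Cauchy-type contour integrals. I fix $\lambda_0 = e^{i\pi/4}\mu$, $\mu \gg 1$, as in that proof. Formula \eqref{eq:resform} represents $\indic_{r \leq R} R(\lambda, t) \indic_{r \leq R}$ as a composition of $Q(\lambda, \lambda_0, t)$ with $(I + K(\lambda, \lambda_0, t))^{-1}$. The $t$-dependence of $Q$ enters only through $\chi_2 R(\lambda_0, t) \chi_1$, which is smooth in $t$ by hypothesis \eqref{eq:condsm}. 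The same hypothesis, combined with the trace-norm argument used to establish \eqref{eq:Klala0} applied to each $t$-derivative (via the formal resolvent identity $\partial_t R(\lambda_0, t) = -R(\lambda_0, t) \dot P(t) R(\lambda_0, t)$, interpreted through \eqref{eq:condsm}), shows that $K(\lambda, \lambda_0, t)$ is smooth in $t$ with values in trace-class operators, uniformly for $\lambda$ on compact sets.

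Since $\gamma$ avoids the poles of $R(\cdot, t)$ for all $|t| < t_0$ and is compact, $I + K(\zeta, \lambda_0, t)$ is uniformly invertible on $\gamma \times (-t_0, t_0)$; smoothness of $K$ in $t$ (in trace norm, hence operator norm) yields smoothness of $(I+K)^{-1}$ in $t$, uniformly in $\zeta \in \gamma$. Combined with the smoothness of $Q$, this gives smoothness of $\chi_1 R(\zeta, t) \chi_2$ in $t$, jointly continuous in $\zeta \in \gamma$, and integration over $\gamma$ then produces \eqref{eq:Rzett}.

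For the simple-pole assertion I would shrink $\gamma$ to a small circle around $\lambda_0$ enclosing no other pole of $R(\cdot, 0)$. The pole-counting integral $-\frac{1}{2\pi i}\int_\gamma \tr\bigl((I+K(\zeta, \lambda_0, t))^{-1}\partial_\zeta K(\zeta, \lambda_0, t)\bigr)\, d\zeta$ is a continuous integer-valued function of $t$, so for small $t$ there is a unique simple pole $\lambda(t)$ of $R(\cdot, t)$ inside $\gamma$. Setting
\begin{equation*}
A(t) := -\frac{1}{2\pi i}\int_\gamma \chi R(\zeta, t)\chi\, d\zeta, \qquad B(t) := -\frac{1}{2\pi i}\int_\gamma \zeta\,\chi R(\zeta, t)\chi\, d\zeta,
\end{equation*}
where $\chi \in \CIc$ equals $1$ on a set where the unperturbed resonant state does not vanish, simplicity of the pole yields $B(t) = \lambda(t) A(t)$; picking $f, g$ with $\langle A(0) f, g\rangle \neq 0$ gives
\begin{equation*}
\lambda(t) = \frac{\langle B(t) f, g\rangle}{\langle A(t) f, g\rangle} \in \CI((-t_0, t_0); \mathbb{C}).
\end{equation*}
The resonant state is then $u(t) := A(t) f$, a smooth family of nonzero vectors in $\mathcal H_R$; the identity $(P(t) - \lambda(t)^2)u(t) = 0$ on the support of $\chi$ follows from $(P(t) - \zeta^2)R(\zeta, t) = I$ and the residue calculus, and the outgoing extension to $\mathcal D_{\loc}(P(t))$ is uniquely determined and smooth in $t$ via the same parametrix.

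The main delicate point I anticipate is propagating the bounded-operator smoothness hypothesis \eqref{eq:condsm} to the \emph{trace-norm} smoothness of $K(\lambda, \lambda_0, t)$; once this is in hand, everything else reduces to the Cauchy integral formula and elementary functional calculus.
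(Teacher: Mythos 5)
There is a genuine gap: you treat \eqref{eq:condsm} as a \emph{hypothesis}, but it is not one. The proposition is stated for the concrete family $P(t)$ of \eqref{eq:Poft}, and verifying \eqref{eq:condsm} for that family is the substantive content of the paper's proof (``It remains to establish \eqref{eq:condsm}\dots''). Everything else in your proposal --- smoothness of $Q$ and $K$ in $t$ \emph{granted} \eqref{eq:condsm}, invertibility of $I+K$ along the contour, and the extraction of $\lambda(t)$ and $u(t)$ by Cauchy integrals around a simple pole --- is consistent with the paper, which disposes of those points in two sentences by pointing to \eqref{eq:resform} and citing \cite[Theorems 4.7, 4.9]{res}. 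Incidentally, the point you single out as delicate (trace-norm smoothness of $K$ in $t$) is not needed for \eqref{eq:Rzett}: operator-norm smoothness suffices to differentiate $(I+K)^{-1}$; the trace norm enters only in the determinant bounds of Proposition \ref{p:reso}.

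The step that actually requires work is the one you pass over with the ``formal resolvent identity $\partial_t R(\lambda_0,t)=-R(\lambda_0,t)\dot P(t)R(\lambda_0,t)$, interpreted through \eqref{eq:condsm}.'' This is circular --- you invoke \eqref{eq:condsm} to make sense of the identity that would prove \eqref{eq:condsm} --- and, more importantly, the identity is not available in this naive form because the domain $\mathcal D(P(t))$ in \eqref{eq:Poft} varies with $t$: if $u(t)=R(\lambda_0,t)f$, the formal derivative $\dot u$ does not satisfy the vertex conditions of $\mathcal D(P(0))$ but rather the inhomogeneous conditions \eqref{eq:dotu2}, so $\dot P$ cannot be treated as a perturbation acting on a fixed domain. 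The paper's proof handles this by constructing a smooth family $g(t)$ satisfying the inhomogeneous vertex conditions \eqref{eq:gioconda}, writing $\partial_t u(t)=g+R(\lambda_0,t)(-2\partial_t a(t)u(t)-G(t))$ with $G_m=(-e^{-2a(t)}\partial_x^2-\lambda_0^2)g_m(t)$, and then justifying differentiability by difference quotients and iterating to obtain $C^\infty$ dependence. Your argument needs this corrector (or an equivalent device) to be complete; without it the proposal assumes precisely what must be proved.
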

\begin{proof} 
The proof of \eqref{eq:Rzett} under the condition \eqref{eq:condsm}
follows from \eqref{eq:resform} and the definitions of $ Q ( \lambda, 
\lambda_0 )$ and $ K ( \lambda, \lambda_0 )$. From that the conclusion
about the deformation of a simple resonance is immediate -- see 
\cite[Theorems 4.7,4.9]{res}. 

It remains to establish \eqref{eq:condsm}. Suppose $ f \in L^2 $ and
define $ u ( t ) := R ( \lambda_0 , t )  f \in L^2 $.
Formally, $ \dot u := \partial_t u ( t ) $ satisfies \eqref{eq:dotu2} 
with $ \dot z = 0 $ and $ z = \lambda^2_0 $. We can find a smooth
family $ g ( t) \in L^2 $ satisfying \eqref{eq:gioconda} with
$ u = u ( t ) $. We then have $ \partial_t u ( t ) = g + 
R ( \lambda_0 , t ) ( - 2 \partial_t a ( t ) u ( t ) - G ( t)  )$, 
where $ G_m := ( -e^{-2a ( t ) } \partial_x^2 - \lambda_0^2 ) g_m ( t ) $. 
By considering difference quotients a similar argument shows that
$ u ( t ) \in L^2$ is differentiable. The argument can be iterated
showing that $ u ( t) \in \CI ( (-t_0, t_0 ) , L^2 )$ and that proves
\eqref{eq:condsm}.
\end{proof}

We now give
\begin{proof}[Proof of Theorem \ref{t:q2r}]
We proceed by contradiction by assuming that, for $ 0 < \delta \ll 
\rho \ll 1 > 0 $
to be chosen, 
\[  \Res ( P ) \cap ( \Omega( \rho, \delta) + D ( 0 , \delta) ) 
=  \emptyset , \  \ 
\Omega( \rho, \delta ) := [ \lambda_0 - \rho , \lambda_0 + \rho ] - i [ 0 , \delta ] \]
does not contain any resonances. Choosing pre-compact open sets, 
independent of $ \epsilon, \rho $ and $ \delta $, 
$ \Omega ( \rho, \delta ) + D ( 0 , \delta ) 
\Subset \Omega_1 \Subset \Omega_2 $ we apply Proposition \ref{p:reso}
to see that for
\begin{equation}
\label{eq:noresup}
\| \indic_{r \leq R } R ( \lambda ) \indic_{r \leq R } \| 
\leq C_2 \delta^{-C_1} , \ \ \lambda \in \Omega( \rho, \delta ) .
\end{equation}
On the other hand, the resolvent estimate in the physical half-plane
$ \Im \lambda > 0 $ and the fact that $ \lambda_0 \in I \Subset 
( 0 , \infty ) $, give 
\begin{equation}
\label{eq:resoup}
\| \indic_{r \leq R } R ( \lambda ) \indic_{r \leq R } \| 
\leq C_3 / \Im \lambda , \ \ \ \Im \lambda > 0 , \ \  | \Re \lambda - 
\Re \lambda_0 | < \rho . 
\end{equation}
To derive a contradiction we use the following simple lemma:
\begin{lemm}
\label{l:threeline}                                                           
Suppose that $ f ( z ) $ is holomorphic in a                                  
neighbourhood of                                                              
$ \Omega := [ - \rho , \rho ] + i[ -\delta_- , \delta_+ ] $, $ \delta_\pm > 0 $.  
Suppose that, for $M >1 $, $ M_\pm > 0 $, and $ 0 < \delta_+ \leq \delta_- < 1 $, 
\begin{equation}
\label{eq:threel1} \begin{split}
&  | f (z ) | \leq M_\pm ,  \ \ \Im z = \pm \delta_+ , \ \ | \Re z | \leq
 \rho,  \ \ \ \ \ \
  | f (z ) | \leq M   ,   \ \ z \in \Omega .
\end{split} \end{equation}
and that
$ \rho^2 > ( 1 + 2 \log M ) \delta_-^2 $.
Then 
\begin{equation}
\label{eq:threeline}
| f ( 0 ) | \leq e M_+^{\theta }
M_-^{1-\theta} ,\quad
\theta:=\frac{\delta_-}{\delta_+ + \delta_-}.
\end{equation}
\end{lemm}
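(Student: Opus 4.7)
The plan is to combine the classical Hadamard three-lines technique with an auxiliary Gaussian damping factor, so as to turn the crude uniform bound $|f|\leq M$ on $\Omega$ into a negligible contribution from the two vertical sides $\Re z=\pm\rho$. The first step is to balance the bounds on the two horizontal sides: setting
\begin{equation*}
\beta:=\frac{\log(M_-/M_+)}{\delta_++\delta_-},\qquad g(z):=f(z)\,e^{-i\beta z},
\end{equation*}
and using $|e^{-i\beta z}|=e^{\beta\Im z}$, one checks that $|g|\leq M_0:=M_+^\theta M_-^{1-\theta}$ on both horizontal sides, while $|g|\leq M\,e^{|\beta|\delta_-}$ on the two vertical sides.

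Next, introduce $h(z):=g(z)\,e^{-z^2/\delta_-^2}$. Since $|e^{-z^2/\delta_-^2}|=e^{(y^2-x^2)/\delta_-^2}$ for $z=x+iy$, this multiplier contributes at most a factor of $e$ on the horizontal sides (where $y^2\leq\delta_-^2$) and at most $e\cdot e^{-\rho^2/\delta_-^2}$ on the vertical sides. Applying the maximum modulus principle on $\Omega$ to $h$ and noting $h(0)=f(0)$ yields
\begin{equation*}
|f(0)|\leq e\cdot\max\bigl(M_0,\; M\,e^{|\beta|\delta_--\rho^2/\delta_-^2}\bigr),
\end{equation*}
and the desired conclusion \eqref{eq:threeline} follows provided $\rho^2/\delta_-^2\geq\log(M/M_0)+|\beta|\delta_-$. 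Since by the maximum principle applied to $f$ itself one may assume without loss of generality that $M_\pm\leq M$, a short case analysis according as $M_+\geq M_-$ or $M_+\leq M_-$ bounds the right-hand side by $1+2\log M$, so that the hypothesis $\rho^2>(1+2\log M)\delta_-^2$ is sufficient.

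The main obstacle is the constant management in the last step: verifying the estimate $\log(M/M_0)+|\beta|\delta_-\leq 1+2\log M$ uniformly, especially in the asymmetric regime $\delta_+\ll\delta_-$ (so $\theta$ is close to $1$), where a large $|\beta|$ can arise when $M_+\ll M_-$. Should the crude Gaussian fall short in that regime, one can replace $e^{-z^2/\delta_-^2}$ by $e^{-\alpha(z-i\xi)^2}$ with $\xi=(\delta_+-\delta_-)/2$ and $\alpha=4/(\delta_++\delta_-)^2$, which equidistributes the damping between the two horizontal sides and optimizes the ratio of vertical decay to horizontal blow-up, producing the cleaner sufficient condition stated in the lemma.
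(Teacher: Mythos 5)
Your argument is correct and is essentially the paper's proof written multiplicatively: the paper applies the maximum principle to the subharmonic function $\log|f(z)\,e^{-i\beta z}\,e^{-Kz^{2}}|-\log\bigl(M_{+}^{\theta}M_{-}^{1-\theta}\bigr)$ with the same $\beta$ as yours and with $K=2\log M/(\rho^{2}-\delta_{-}^{2})$ in place of your $1/\delta_{-}^{2}$, so the two arguments differ only in the choice of the Gaussian constant. Your final case analysis does close, so the fallback multiplier $e^{-\alpha(z-i\xi)^{2}}$ is unnecessary: the quantity $\log(M/M_{0})+|\beta|\delta_{-}$ equals $\log M-\log M_{-}$ when $M_{+}\ge M_{-}$ and is at most $2\theta(\log M-\log M_{+})$ when $M_{+}\le M_{-}$, hence is at most $2\log M<1+2\log M$ once one assumes $1\le M_{\pm}\le M$ (the lower bound $M_{\pm}\ge 1$, or at least $M_{\pm}\ge 1/M$, is also used implicitly in the paper's estimate on the vertical sides, and holds in the application).
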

\begin{proof}
We consider the following subharmonic function defined in a neighbourhood of 
$ \Omega $. To define it
we put $ m_\pm = \log M_\pm $, $ m = \log M > 0 $, $ z = x + iy $, and
\[  u ( z ) := \log | f ( x + i y  ) | - \frac{ \delta_- m_+  + \delta_+  
m_-  + y  (m_+ - m_-) }{ \delta_+ + \delta_-} - K x^2 + K y^2, \]
where $ K := 2 m/( \rho^2 - \delta_-^2) $. Then for $ \Im z = \pm \delta_\pm$, 
$ | \Re z |\leq \rho $, $ u ( z ) \leq \delta_-^2 K \leq 1 $
since we assumed $ \rho^2 > ( 1 + 2 m ) \delta_-^2 $).
 When $ | \Re z | = \rho $ then $ u ( z ) \leq 2 m  - K ( \rho^2 -\delta_-^2) \leq 0 $. The maximum principle for subharmonic functions shows that
 $ \log | f ( 0 ) | - \theta m_+ - ( 1 - \theta) m_- \leq 1 $ and that
 concludes the proof. 
\end{proof} 

We apply this lemma to 
$ f ( z ) := \langle  \indic_{r \leq R } R ( z + \lambda_0 ) \indic_{r \leq R } \varphi, \psi \rangle $, $  \varphi, \psi
\in L^2 $,
with  $ M_+ = C_3/\delta_+ $, $ M=M_- = C_2 \delta^{-C_1} $. If we show that
\begin{equation}
\label{eq:fof0} 
  | f ( 0 ) | \ll \frac 1 \epsilon \| \varphi \| \| \psi \| , 
  \end{equation}
we obtain a contradiction to \eqref{eq:quasim} by putting 
$ \psi = ( P - \lambda_0^2) u $ and $ \varphi = u $ and using the support
property of $ u $ (the outgoing resolvent is the right inverse of 
$  P - \lambda_0^2 $ on compactly supported function):
\[ \begin{split}  1 &= 
\langle   R ( \lambda_0 ) ( P - \lambda_0^2) u , u  \rangle 
= \langle  \indic_{r \leq R }  R( \lambda_0 )  \indic_{r \leq R }  ( P - \lambda_0 ) u , u \rangle \ll \frac 1 \epsilon \epsilon \ll 1 . 
\end{split} \]
For $ \gamma < 1 $ choose 
$ \gamma < \gamma_1 < \gamma_2 < \gamma_3 < 1 $ and put
\[ \rho = \epsilon^{ \gamma_1} , \ \ \delta_- = \epsilon^{\gamma_2}, \ \ 
\delta_+ = \epsilon^{ \gamma_3 } . \]
Then \eqref{eq:threeline} implies \eqref{eq:fof0} and that 
completes the proof.
\end{proof}

For completeness we also include the following proposition which would
be a converse to Theorem \ref{t:q2r} for $ \gamma = 1 $.
The more subtle higher dimensional case in the semiclassical setting
was given by Stefanov \cite{St1}.

\begin{prop}
\label{p:converse}
Suppose that $ P $ satisfies the assumptions of Theorem \ref{t:q2r}
and let $ R > 0 , \delta > 0 $. There exists a constant $ C_0 $
depending only of $ R, \delta $ and $\mathcal L $ such that for any 
$ 0 < \epsilon < \delta/2  $, 
\begin{equation}
\label{eq:converse}
\begin{split} 
& D ( \lambda_0 , \epsilon ) \cap \Res ( P ) \neq \emptyset , \ \lambda_0 > \delta 
 \ \Longrightarrow \\
& \ \ \ \ \  \exists \, u \in \mathcal H_R \cap 
\mathcal D_P  , \ \| u \|= 1 , \ \
\| ( P - \lambda_0^2 ) u \| \leq C_0 \epsilon ( \lambda_0 + \epsilon)  . \end{split}
\end{equation}
\end{prop}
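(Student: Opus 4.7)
The plan is to convert the resonant state of $P$ at a resonance $\lambda_1 \in D(\lambda_0, \epsilon)$ into an $L^2$ quasimode for $P$ at energy $\lambda_0^2$ by truncating its outgoing tail in a manner that respects the graph boundary conditions. Let $v \in \mathcal{D}_{\loc}(P)$ be a nontrivial resonant state: $(P - \lambda_1^2) v = 0$, with the outgoing form $v_m(x) = c_m e^{i\lambda_1 x}$ on each infinite lead $m = 1, \ldots, K$.

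First I would establish an a priori bound on the tail amplitudes. Applying Green's identity \eqref{eq:parts} to $v$ and $\bar v$ on the graph truncated at $x = R$ on each lead, the contributions at each vertex cancel by continuity and the Kirchhoff condition, while the boundary contributions at $x = R$ can be computed explicitly from the outgoing form of $v_m$. This yields the clean identity
\[ \sum_{m=1}^{K} |c_m|^2 = -2\,\Im\lambda_1 \cdot \|v\|^2_{L^2(\text{finite edges})}, \]
and hence $\sum_m |c_m|^2 \leq 2\epsilon\,\|v\|^2_{L^2(\text{finite})}$, since $|\Im\lambda_1| \leq |\lambda_1 - \lambda_0| \leq \epsilon$.

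Next I would form the quasimode by cutoff. Pick a smooth $\chi$ equal to $1$ on all finite edges and on $[0, R-1]$ of each lead, vanishing on $[R, \infty)$. Since $\chi \equiv 1$ in a neighbourhood of every vertex, $\chi v$ satisfies the Kirchhoff and continuity conditions and lies in $\mathcal{H}_R \cap \mathcal{D}(P)$. A direct computation gives
\[ (P - \lambda_0^2)(\chi v) = (\lambda_1^2 - \lambda_0^2)\,\chi v + [P, \chi]\, v, \]
where the first term contributes at most $2\epsilon(\lambda_0 + \epsilon)\,\|\chi v\|$ to the quasimode error. The commutator $[P,\chi]\, v$ is supported in the cutoff region on each lead where $v_m = c_m e^{i\lambda_1 x}$ is explicit, and the a priori estimate on $c_m$ provides the main input for bounding it.

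The main obstacle is the final estimate of the commutator term: combining $\sum|c_m|^2 \leq 2\epsilon\,\|v\|^2_{L^2(\text{finite})}$ with $\|\chi v\| \geq \|v\|_{L^2(\text{finite})}$ in the naive way only produces a factor of $\sqrt{\epsilon}$, whereas the proposition asks for $\epsilon$. To close this gap I would refine the quasimode by replacing, in the cutoff region, the outgoing tail $c_m e^{i\lambda_1 x}$ by the $\lambda_0$-tail $c_m e^{i\lambda_0 x}$ (which lies exactly in the kernel of $P - \lambda_0^2$ on the lead), and compensating the resulting $O(\epsilon)$ Kirchhoff mismatch at each lead-attached vertex by a correction supported in a fixed neighbourhood of that vertex, of size $O(\epsilon|c|) = O(\epsilon^{3/2}\,\|v\|_{L^2(\text{finite})})$ by the identity above. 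The uniform bound $|c_m e^{i\lambda_1 x} - c_m e^{i\lambda_0 x}| \leq |c_m|\,\epsilon R$ on $[0,R]$ then lets the substitution be made while trading the $\sqrt{\epsilon}$ loss for the desired $\epsilon$. Normalising the refined quasimode $\tilde u$ as $u = \tilde u/\|\tilde u\|$ yields $\|u\| = 1$ and $\|(P - \lambda_0^2)u\| \leq C_0\,\epsilon(\lambda_0 + \epsilon)$. This refinement is the one-dimensional counterpart of the construction of Stefanov \cite{St1} in the semiclassical higher-dimensional setting.
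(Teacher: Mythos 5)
Your first two steps reproduce the paper's own proof: the Green's identity on the truncated graph giving $\sum_{m=1}^K|c_m|^2=2|\Im\lambda_1|\,\|v\\|^2_{\mathcal H_0}\le 2\epsilon\|v\|^2_{\mathcal H_0}$, and the cutoff $\chi v$ on the leads with the splitting $(P-\lambda_0^2)(\chi v)=(\lambda_1^2-\lambda_0^2)\chi v+[P,\chi]v$. You are also right that the commutator term is the delicate point. The trouble is that your repair does not close the gap you identified. The obstruction to an $O(\epsilon)$ bound is the \emph{amplitude} of the tail, $|c_m|\le(2\epsilon)^{1/2}\|v\|_{\mathcal H_0}$, not its frequency: whether the tail is $c_me^{i\lambda_1 x}$ or $c_me^{i\lambda_0 x}$, it does not decay on $[0,R]$, and any modification making it vanish to first order at $x=R$ (as membership in $\mathcal H_R\cap\mathcal D(P)$ requires, since $u_m\in H^2$ must be supported in $[0,R]$) produces an error of $L^2$ size comparable to $|c_m|$, i.e.\ $O(\sqrt\epsilon\,\|v\|_{\mathcal H_0})$. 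Indeed, if $(-\partial_x^2-\lambda_0^2)u_m=f$ on a lead with $u_m(R)=u_m'(R)=0$, then $|u_m(0)|\le\lambda_0^{-1}R^{1/2}\|f\|_{L^2}$, so matching the vertex value $u_m(0)\approx c_m$ forces $\|f\|_{L^2}\gtrsim |c_m|$: no choice of tail profile attached to $v$'s vertex data can beat $\sqrt\epsilon$. The swap $e^{i\lambda_1x}\to e^{i\lambda_0x}$ only changes things by the harmless $O(\epsilon|c_m|)$ you computed, and the $\sqrt\epsilon$ term survives untouched.

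For comparison, the paper's proof is exactly your ``naive'' construction: it bounds $\|(P-\lambda_0^2)\widetilde u\|^2$ by $(2\epsilon(\lambda_0+\epsilon))^2\|\widetilde u\|^2+C\sum_m|a_m|^2(r^{-2}+(\lambda_0+\epsilon)^2)e^{2\epsilon r}$ and then asserts the final bound $C_{r,\delta}\,\epsilon^2(\lambda_0+\epsilon)^2\|v\|^2_{\mathcal H_0}$. Since the Green's identity only yields $\sum_m|a_m|^2\le 2\epsilon\|v\|^2_{\mathcal H_0}$ (first power of $\epsilon$), that last step gives the second term only to order $\epsilon$, hence a quasimode error $O(\sqrt\epsilon\,(\lambda_0+\epsilon))$; so you have put your finger on a genuine subtlety rather than a routine estimate, and the paper's own display glosses over it. But your proposed resolution is not a proof. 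To reach the stated $O(\epsilon(\lambda_0+\epsilon))$ one needs either the extra information $|\Im\lambda_1|\lesssim\epsilon^2$ (which does not follow from $\lambda_1\in D(\lambda_0,\epsilon)$ alone) or a construction not based on truncating $v$ at all --- the quasimode in \eqref{eq:converse} is not required to resemble the resonant state --- along the lines of the more subtle argument of Stefanov cited in the paper. As it stands, your argument proves the proposition with $\epsilon$ replaced by $\sqrt\epsilon$ on the right-hand side of the error bound, and the additional step you propose does not upgrade it.
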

\begin{proof} 
Suppose that $ \lambda $ a resonance of $ P $ with $ | \lambda - \lambda _0 | < \epsilon $
and let $ v $ be the corresponding resonant state. Then in each
infinite lead, $ v_m ( x ) = a_m e^{ i \lambda x } $, $ 1 \leq m \leq K $.
As in \eqref{eq:Imz1}, 
\begin{equation*}
\begin{split} 
\Im (\lambda^2) \| v \|_{ \mathcal H_0 }^2 & = - \Im  \sum_{ m=1}^{K} \partial_x v_m ( 0 ) \bar v_m (  0  ) = - \Im \sum_{ m = 1}^K i \lambda | a_m|^2  \\
&  = 
- \Re \lambda \sum_{m=1}^K | a_m|^2 ,
\end{split}
\end{equation*}
and since $ \Re \lambda > \delta/2 > 0 $, 
$ \sum_{ m=1}^K | a_m |^2 = 2 | \Im \lambda | \| v \|_{ \mathcal H_0 } \leq  
2 \epsilon \| v \|_{ \mathcal H_0 } $.

Suppose $ r < R/2  $ and $ \chi \in \CIc ( [ 0 , 2) $ is equal to $ 
1 $ on $ [ 0 , 1 ] $. We then define 
$ \widetilde u \in \mathcal H_R \cap \mathcal D_P  $ by 
\[ \widetilde u_m ( x ) :=  \left\{ \begin{array}{ll} \chi ( x /r ) v_m ( x ) , 
& 1 \leq m \leq K \\
v_m ( x ) &  K+1\leq m \leq K + M . \end{array} \right. 
\]
Now, 
\[  \| \widetilde u \|^2 = \| v \|_{\mathcal H_0 }^2  + \sum_{ m=1}^K 
|a_m|^2 \int_\RR e^{ 2 |\Im \lambda | x} \chi ( x / r )^2 dx = 
\| v \|_{ \mathcal H_0}^2 ( 1 + \mathcal O ( \epsilon r e^{ 2 \epsilon r } ) ) , \]
and hence, 
\[ \begin{split} 
\| ( P - \lambda_0^2 ) \widetilde u \|^2 & = 
| \lambda^2 - \lambda_0^2 |^2 \| \widetilde u \|^2 + \| 
[ P , \chi ( \bullet/ r ) ] \widetilde u \|^2 \\
& \leq  ( 2 \epsilon ( \lambda_0 + \epsilon ))^2 \| \widetilde u \|^2 
+  C \sum_{ m=1}^K |a_m|^2 ( r^{-2} +  (\lambda_0 + 
\epsilon )^2 )   e^{ 2 \epsilon r } \\ 
&\leq  C_{ r ,\delta  }  \epsilon^2 ( \lambda_0 + \epsilon )^2 
\| v \|_{ \mathcal H_0 }^2 . 
\end{split} \]
We conclude that we can take $ u := \tilde u / \| \tilde u \| $ as
the quasimode.
\end{proof}

\def\arXiv#1{\href{http://arxiv.org/abs/#1}{arXiv:#1}}


\begin{thebibliography}{0}


\bibitem{BeKu} G.~Berkolaiko and P.~Kuchment, {\em Introduction to Quantum Graphs}, Mathematical Surveys and Monographs, {\bf 186} AMS, 2013.

\bibitem{BK1} G. Berkolaiko, P. Kuchment, {\em Dependence of the spectrum of a quantum graph on vertex conditions and edge lengths}, in {\em ``Spectral Geometry"}, edited by A.H. Barnett, C.S. Gordon, P.A. Perry and A. Uribe. Proceedings of Symposia in Pure Mathematics {\bf 84}(2012).

\bibitem{Car} T.~Carleman, {\em \"Uber die asymptotische Verteilung der Eigenwerte partieller Differentialgleichungen,}
Ber. Verk. S\"achs. Akad. Wiss. Leipzig. Math.-Phys. Kl. {\bf 88}(1936), 119--134.

\bibitem{cdv} Y. Colin de Verdi\`ere,
        \emph{Pseudo-laplaciens. II,\/}
        Ann. Inst. Fourier \textbf{33}(1983), 87--113.

\bibitem{cjn} H.~Cornean, A.~Jensen and G.~Nenciu, {\em Metastable states when the Fermi golden rule constant vanishes}. Comm.~Math.~Phys. {\bf 334}(2015), 1189--1218.

\bibitem{DaPu} E.B.~Davies, A.~Pushnitski, 
{\em Non-Weyl resonance asymptotics for quantum graphs,}
Analysis \& PDE {\bf 4}(2012), 729--756.

\bibitem{exda} E.B.~Davies, P.~Exner, J.~Lipovsk\'y,
{\em Non-Weyl asymptotics for quantum graphs with general coupling conditions,} J.~Phys.~A {\bf 43}(2010), 474013.

\bibitem{res} S.~Dyatlov and M.~Zworski,
		\emph{Mathematical theory of scattering resonances,\/}
		book in preparation; \url{http://math.mit.edu/~dyatlov/res/res.pdf}

\bibitem{exlip} P.~Exner and J.~Lipovsk\'y,
{\em Resonances from perturbations of quantum graphs with rationally related edges,} 
J.~Phys.~A, {\bf 43}(2010), 105305.

\bibitem{gnu} S.~Gnutzmann, H.~Schanz and U.~Smilansky,
{\em Topological resonances in scattering on networks (graphs),}
Phys.~Rev.~Lett. {\bf 110}(2013), 094101. 

\bibitem{petr} Y.~Petridis and M.~Risager, {\em Dissolving of cusp forms: higher-order Fermi's golden rules},
Mathematika {\bf 59}(2013), 269--301.

\bibitem{phisa} R.S. Phillips and P. Sarnak,
        \emph{Automorphic spectrum and Fermi's golden rule,} 
        J. Anal. Math. {\bf 59}(1992), Festschrift on the occasion of the 70th birthday of Shmuel Agmon, 179--187.

\bibitem{SZ1} J. Sj\"ostrand and M. Zworski,
        \emph{Complex scaling and the distribution of scattering poles,\/}
        J. Amer. Math. Soc. \textbf{4}(1991), 729--769.
 
\bibitem{St3} P.~Stefanov, 	{\em Stability of resonances under smooth perturbations of the boundary,} Asymptotic Anal., {\bf 9}(1994), 291--296.
 
\bibitem{St} P.~Stefanov, \emph{Quasimodes and resonances: sharp lower bounds,\/} Duke Math. J. \textbf{99}(1999), 75--92.

\bibitem{St1} P.~Stefanov, {\em 	Resonances near the real axis imply existence of quasimodes,} C.~R.~Acad.~Sci.~Paris S\'er.~I~Math., {\bf 330}(2000), 105--108.
 


\bibitem{StVo} P. Stefanov and G. Vodev,
        \emph{Neumann resonances in linear elasticity for an arbitrary body,\/}
        Comm. Math. Phys. \textbf{176}(1996), 645--659.
 
 
\bibitem{tz} S.H.~Tang and M.~Zworski, 
		{\em From quasimodes to resonances,\/}
		Math.~Res.~Lett. \textbf{5}(1998), 261--272.
		
		
\bibitem{Tit} E.C.~Titchmarsh, {\em The theory of functions,} 2nd Edition,
Oxford University Press, 1985.		
		
		
\bibitem{Zw} M.~Zworski, 
{\em Distribution of poles for scattering on the real line,}
J.~Funct.~Anal. {\bf 73}(1987), 277-296.


\end{thebibliography}
\end{document}